\documentclass[%
 reprint,
 amsmath,amssymb,
 aps,
]{revtex4-2}

\usepackage{graphicx}
\usepackage{dcolumn}
\usepackage{bm}
\usepackage{xcolor}
\usepackage{mathtools}
\usepackage{physics}
\usepackage{tikz}
\usepackage{amsmath, amsfonts, amsthm}
\usetikzlibrary{positioning,calc,arrows.meta,shapes.geometric}

\definecolor{tnink}{HTML}{0B2D4D}
\definecolor{tnlightblue}{HTML}{A6D8F5}
\definecolor{tnblue}{HTML}{0072B2}
\definecolor{tngrey}{HTML}{808080}
\definecolor{tnlightgrey}{HTML}{D9D9D9}
\definecolor{tngreen}{HTML}{009E73}
\definecolor{tnorange}{HTML}{E69F00}
\definecolor{tnred}{HTML}{D55E00}
\definecolor{tnpurple}{HTML}{CC79A7}
\definecolor{tncyan}{HTML}{56B4E9}
\tikzset{
  leg/.style={draw=tnink, line width=1.2pt, line cap=round, line join=round},
  tensor/.style={
  rectangle,
  draw=tnink,
  line width=1.1pt,
  minimum size=8mm,
  inner sep=0pt,
  text width=8mm,
  align=center,
  fill=#1,
  text=tnink
  },
  tensor/.default=tnblue,
  op/.style={rectangle, rounded corners=1.2mm, draw=tnink, line width=1.1pt, minimum width=9mm, minimum height=7mm, inner sep=1pt, fill=#1, text=white},
  op/.default=tncyan,
  iso/.style={regular polygon, regular polygon sides=3, draw=tnink, line width=1.1pt, minimum size=10mm, inner sep=0pt, fill=#1, text=white},
  iso/.default=tnorange,
  lab/.style={font=\small, text=black},
}

\newtheorem{theorem}{Theorem}

\newtheorem{proposition}[theorem]{Proposition}

\newtheorem{definition}[theorem]{Definition}

\newcommand{\R}{\mathbb{R}}

\newcommand{\C}{\mathbb{C}}

\newcommand{\St}{\text{St}}
\newcommand{\U}{\text{U}}
\renewcommand{\P}{\mathbb{P}}
\newcommand{\id}{\text{id}}

            \usepackage{mathtools}

\begin{document}

\preprint{APS/123-QED}

\title{Excitation spectra and rank tomography of linear matrix product tangent spaces}

\author{Otto T.P. Schmidt}
\affiliation{%
 INO-CNR Pitaevskii BEC Center and Dipartimento di Fisica, Universita di Trento, Trento\\
 Max-Planck-Institute for the Mathematics in the Sciences, Leipzig
}%

\author{Iacopo Carusotto}
\affiliation{
 INO-CNR Pitaevskii BEC Center and Dipartimento di Fisica, Universita di Trento, Trento\\
}%


\begin{abstract}
\noindent
We formulate a tangent-space method for algebraic varieties of matrix product states (MPS) to study excitation spectra of non-uniform quantum many-body systems with open boundary conditions. We further introduce a rank tomography of the MPS tangent space, which characterizes its expressivity in terms of particle-sector rank profiles of the underlying MPS variety. Using the Bose--Hubbard model as a benchmark, we illustrate that the method reproduces low-lying excitations and captures finite-size precursors of the Mott-insulator to superfluid transition.

\end{abstract}

\maketitle


\section{\label{sec:intro}Introduction}
Matrix product states (MPS) provide an efficient variational description for ground states of one-dimensional quantum many-body systems \cite{Schollwoeck_2010, landsberg_2012, Oseledets_2011}. However, their usefulness is not limited to ground-state approximations: linear perturbations around a variational ground state can be represented in the corresponding MPS tangent space and used to reconstruct (low-energy) excitation spectra. This is known as the MPS tangent-space method \cite{haegeman_2011, Haegeman_2014, van_damme_2021}. In this work, we derive this construction for finite, non-uniform systems with open boundary conditions, emphasizing the algebraic structure underlying the variational space. This algebro-geometric perspective allows us to define MPS of fixed bond dimensions as projective variety defined by rank constraints on tensor flattenings. Using the smooth full-rank stratum of this variety, we obtain an explicit representation of the tangent space. Then, at first order, the excitation spectrum can be approximated via the MPS tangent-space method by diagonalizing the effective Hamiltonian projected onto tangent space, where redundant directions due to the gauge freedom have been removed.
\\
The contribution of this work is not only to compute the excitation spectrum, but also to understand expressivity and accuracy of the tangent-space method in general. For this, we introduce a tangent-space tomography based on the ranks of particle-number sectors of the tangent space. From these ranks, we define the particle-number resolved parametric deficiency of the tangent space, which quantifies the number of missing independent directions in each sector. Furthermore, we resolve the usual MPS bond dimensions by decomposing each
ground-state flattening into particle-number blocks and recording their ranks. The resulting particle-resolved Schmidt-rank distribution (PRSR) defines particle-resolved rank profiles inside the fixed coarse MPS rank stratum defined by the bond dimension. This means that two states may have identical bond dimensions while supporting different PRSR distributions and, consequently, different tangent spaces. The PRSR explains how the internal structure of the ground state controls the parametric deficiency, and hence the expressivity of the linear tangent method.
\\
We illustrate these ideas on the Bose--Hubbard model and confirm that our construction reproduces the spectrum obtained by exact diagonalization, including the softening of the particle-type mode near the finite-size precursor of the Mott-insulator to superfluid boundary \cite{Fisher_1989}. The example numerical calculations highlight the conceptual trade-off that is introduced with the tangent-space method: larger bond dimensions increase the available directions on the tangent space and can improve the spectral reconstruction, at higher computational cost, while small bond dimensions only give an approximation to the excitations, at lower computational cost.
On top of highlighting the algebraic structure of the MPS tangent-space variational ansatz, our results suggest the use of our method for practical calculations of (low-energy) excitation spectra of quantum many-body systems.
\\~\\
This work is structured as follows. In Section \ref{sec:state_description} we define the MPS variety as a variational space and introduce its tangent space. In Section \ref{sec:lin_tangent_method} we then derive the linear tangent-space method from the time-dependent variational principle and discuss the reconstructed spectra of the Bose--Hubbard model in Section \ref{sec:results}. In Section \ref{sec:tangent_space_tomography} we introduce and derive the tangent-space tomography and give illustrative examples. We then conclude in Section \ref{sec:conclusion} with our perspectives. The Appendix (Section  \ref{sec:appendix}) includes explicit examples for the tangent-space tomography, an elaboration of the computational costs of the tangent-space method, and proofs of general properties of the MPS variety.

\section{\label{sec:state_description}Variational manifolds of MPS}

A pure quantum state is a line in Hilbert space or equivalently, a point in a projective Hilbert space. For each site/particle $i=1,\dots,N$, let $\mathcal{H}_i = \mathbb{C}^{d_i}$ be the local Hilbert space of dimension $d_i$ and write 
\begin{equation*}
    \mathcal{H} = \bigotimes_{i=1}^N \mathcal{H}_i
\end{equation*}
for the Hilbert space of the full $N$-site/particle system. The points in the projective Hilbert space $X:=\mathbb{P}(\mathcal{H})$ correspond to equivalence classes $[\psi] = \{\lambda \psi : \lambda \in \mathbb{C}^\ast\}$ of non-zero vectors $\psi \in \mathcal{H} \setminus \{0\}$. 

A variational class of pure states is a subset $\mathcal{M} \subseteq \mathbb{P}(\mathcal{H})$,
often with additional geometric or algebraic structure. For example, one may study \(\mathcal{M}\) as a smooth complex manifold, as it is common in the differential-geometric treatment of time-dependent variational principles \cite{Haegeman_2014}, or as an algebraic variety where the variational space is defined by polynomial relations. In this work we choose the latter description.
\\
Denote by $\boldsymbol{d}=(d_1,\dots,d_N)\in \mathbb{Z}_{>0}^N$
the $N$-tuple of local dimensions of $\mathcal{H}_i$. If we choose an ordered basis $\{e_{j_i}\}_{j_i=1}^{d_i}$ for each $\mathcal{H}_i$, then every vector $\psi\in\mathcal{H}$ can be expanded as
\begin{equation*}
    \psi
    =
    \sum_{j_1=1}^{d_1}\cdots\sum_{j_N=1}^{d_N}
    c_{j_1,\dots,j_N}\,
    e_{j_1}\otimes\cdots\otimes e_{j_N},
\end{equation*}
with coefficients $c_{j_1,\dots,j_N}\in\mathbb{C}$. 
\subsection{Tensor network states and MPS}\label{section:MPS}
The number of coefficients to describe $\psi \in \mathcal{H}$ grows exponentially in the system size, namely as
\begin{equation*}
    \dim(\mathcal H)=\prod_{i=1}^N d_i,
\end{equation*}
which highlights the so-called \textit{curse of dimensionality} for parametrizations of many-body wave functions. Moreover, a number of natural computational problems for tensors, such as rank-related decision problems, are known to be NP-hard \cite{Hillar_2013}. To address this complexity, one introduces structured variational classes of states, such as the class of \emph{tensor network states}. 
\\
Tensor networks can be described abstractly in terms of graphs. We give a general definition in Appendix \ref{app:graph_def_TNS}. 
\\
For \emph{matrix product states} (MPS), the underlying graph is a path on $N$ vertices. We define the \emph{bond dimensions} $\boldsymbol D=(D_0,D_1,\dots,D_N)\in \mathbb{Z}_{>0}^{N+1}$ and impose $D_0=D_N=1$ in the case of open boundary conditions (OBC).
\begin{definition}
The \emph{affine MPS contraction map} associated with bond dimensions $\boldsymbol D$ and physical dimensions $\boldsymbol d$ is the polynomial map
\begin{align*}
\widetilde{\Phi}_{\boldsymbol D,\boldsymbol d}:
\mathcal A_{\boldsymbol D,\boldsymbol d}
\longrightarrow
\mathcal H,
\end{align*}
with $\mathcal A_{\boldsymbol D,\boldsymbol d}
:=
\prod_{i=1}^N \bigl(\C^{D_{i-1}} \otimes \C^{d_{i}} \otimes \C^{D_{i}}\bigr)$
and given by
\begin{align*}
&\bigl(A^1,\dots,A^N\bigr)
\mapsto \\
&
\sum_{j_1=1}^{d_1}\cdots\sum_{j_N=1}^{d_N}
\bigl(A^1_{j_1}\cdots A^N_{j_N}\bigr)\,
e_{j_1}\otimes\cdots\otimes e_{j_N},
\end{align*}
where $A^i \in \C^{D_{i-1}} \otimes \C^{d_{i}} \otimes \C^{D_{i}}$ and $A^i_{j_i} \in \C^{D_{i-1}} \otimes \C^{D_{i}}$.
\end{definition}
\noindent
The projective parametrization is 
\begin{equation*}
    \P\tilde\Phi_{\boldsymbol D,\boldsymbol d}:\mathcal A_{\boldsymbol D,\boldsymbol d}\setminus\widetilde{\Phi}^{-1}_{\boldsymbol D,\boldsymbol d}(0)\to X = \P(\mathcal H).
\end{equation*}
Both the affine and projective parametrization are polynomial in the tensor entries. For computational purposes, however, it is often preferable to use canonical forms. In the following we define a \emph{left-canonical parametrization} \cite{Schollwoeck_2010}, which will be used throughout this work. Note that any other canonical parametrization is also possible. 
\\
The complex Stiefel manifold of orthonormal $k$-frames in $\C^n$ is defined as
\begin{equation*}
    \St(k,n):=\{A\in \C^{n\times k}\mid A^*A=\id_k\}.
\end{equation*}
For $i=1,\dots,N-1$, we define the matrices
\begin{equation*}\label{eq:stiefelblocks}
M^i:=
\begin{bmatrix}
M^i_1\\
\vdots\\
M^i_{d_i}
\end{bmatrix} \in \St(D_i,D_{i-1}d_i),
\quad
M^i_{j_i}\in \C^{D_{i-1}\times D_i}.
\end{equation*}
The left-canonical condition is then 
\begin{equation*}
    (M^i)^*M^i=\sum_{j_i=1}^{d_i}(M^i_{j_i})^*M^i_{j_i}=\id_{D_i}.
\end{equation*}
We assume throughout that the bond dimensions are \emph{admissible}, so that
\(D_i\le D_{i-1}d_i\) for \(i=1,\dots,N-1\), \(D_{N-1}\le d_N\).

\begin{definition}
Fix bond dimensions $\boldsymbol D$ and physical dimensions $\boldsymbol d$. The left-canonical MPS parameter space is
\begin{equation*}
\mathcal P_{\boldsymbol D,\boldsymbol d}
:=
\prod_{i=1}^{N-1}\mathrm{St}(D_i,D_{i-1}d_i)
\times
\P^=\bigl(\C^{D_{N-1}\times d_N}\bigr),
\end{equation*}
where $\P^=\bigl(\C^{D_{N-1}\times d_N}\bigr)$ is the projectivization of the space of full rank $D_{N-1}\times d_N$ complex matrices.
The \emph{left-canonical MPS map} is
\begin{equation*}
\Phi_{\boldsymbol D,\boldsymbol d}:\mathcal P_{\boldsymbol D,\boldsymbol d}\to X,
\end{equation*}
given by
\begin{align*}
&(M^1,\dots,M^{N-1},[C])
\mapsto \\
&\left[
\sum_{j_1=1}^{d_1}\cdots\sum_{j_N=1}^{d_N}
\bigl(M^1_{j_1}\cdots M^{N-1}_{j_{N-1}}C_{j_N}\bigr)\,
e_{j_1}\otimes\cdots\otimes e_{j_N}
\right],
\end{align*}
where $M^i \in \St(D_i, D_{i-1}d_i)$ and $[C]\in \P^=(\C^{D_{N-1}\times d_N})$ (see Figure \ref{fig:TT_decomp}).
\end{definition}
\begin{figure}[ht]
\centering
\begin{tikzpicture}[x=1cm,y=1cm]

  \begin{scope}
    \foreach \i/\c/\name in {
      0/tnlightblue/M^1,
      1/tnlightblue/M^2,
      2/tnlightblue/M^3,
      3/tnlightblue/M^4,
      4/tnpurple/C
    }{
      \pgfmathtruncatemacro{\site}{\i+1}
      \node[tensor=\c] (T\i) at (1.5*\i,0) {$\name$};
      \draw[leg] (T\i) -- ++(0,0.75)
        node[lab,above] {$d_{\site}$};
    }

    \foreach \i in {0,1,2,3}{
      \pgfmathtruncatemacro{\j}{\i+1}
      \draw[leg] (T\i) -- node[lab,above] {$D_{\j}$} (T\j);
    }
  \end{scope}

\end{tikzpicture}

\caption{A MPS representation in left-canonical gauge of an order-five projective tensor $[T]$.}
\label{fig:TT_decomp}
\end{figure}

The MPS representation is not unique and hence both affine and projective parametrization maps are non-injective. This non-uniqueness is encoded by the \emph{gauge group}.

\begin{definition}
The \textit{MPS gauge group} is
\begin{equation*}
G_{MPS}:=\prod_{i=1}^{N-1}\mathrm{U}(D_i).
\end{equation*}
It acts on the left-canonical parameter space by
\begin{equation*}
g_{MPS}:\mathcal P_{\boldsymbol D,\boldsymbol d}\times G_{MPS}\to \mathcal P_{\boldsymbol D,\boldsymbol d},
\end{equation*}
such that
\begin{align*}
&\bigl((M^1,\dots,M^{N-1},[C]),(G_1,\dots,G_{N-1})\bigr)
\mapsto  \\
&\bigl(M^1G_1,\,G_1^{-1}M^2G_2,\dots,G_{N-2}^{-1}M^{N-1}G_{N-1},\,[G_{N-1}^{-1}C]\bigr).\nonumber
\end{align*}
\end{definition}

\subsection{MPS varieties}\label{section:MPS_varieties}
We can define the variational manifold of MPS states in terms of algebraic varieties. This will allow us to treat the variational object not only as manifold but via polynomial relations. Recent work in algebraic geometry has elucidated various properties of these MPS varieties (also known as tensor train (TT) varieties). In \cite{Bernardi_2022}, the dimension of various tensor network varieties, including the MPS variety, has been derived. Furthermore, another geometric invariant, the degree of MPS varieties, has been derived in \cite{rosana_2026} using integral geometry. Finally, the authors in \cite{hosten_2026} proved the ideal-theoretic equations for the MPS variety. Together, these results show that MPS/TT varieties are now sufficiently well understood to serve as a natural setting for an algebro-geometric analysis of MPS.
\begin{definition}
The \emph{projective MPS variety} is the Zariski closure of the image of the projective contraction map \cite{Borovik_2025},
\begin{equation*}
    V_{\boldsymbol D,\boldsymbol d}
:=
\overline{\Im(\P\widetilde{\Phi}_{\boldsymbol D,\boldsymbol d})}^{\,Zar}
\subseteq X.
\end{equation*}
\end{definition}
For matrix product states, this abstract closure admits a description in terms of flattening ranks. Consider for each $k=1,\dots,N-1$ the flattening
\begin{equation*}
T^{(k)}:
\bigotimes_{i=k+1}^N (\C^{d_i})^*
\longrightarrow
\bigotimes_{i=1}^k \C^{d_i}.
\end{equation*}
If $T$ is represented by an MPS with bond dimensions $\boldsymbol D$, then necessarily 
\begin{equation*}
    \rank(T^{(k)})\le D_k
\qquad\text{for all }k=1,\dots,N-1.
\end{equation*}
Conversely, these rank bounds are also sufficient: a tensor whose flattenings satisfy the above rank inequalities admits a MPS decomposition with bond dimensions bounded by $\boldsymbol D$. The image-closure definition agrees with the intrinsic rank description.

\begin{proposition}\label{prop:rank_description_variety}
The projective MPS variety admits the description
\begin{equation*}
    V_{\boldsymbol D,\boldsymbol d}
=
\left\{
[T]\in X
\,\middle|\,
\rank\bigl(T^{(k)}\bigr)\le D_k,k=1,\dots,N-1
\right\}.
\end{equation*}
\end{proposition}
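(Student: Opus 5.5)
We prove the two inclusions separately. Write $R:=\{[T]\in X \mid \rank(T^{(k)})\le D_k,\ k=1,\dots,N-1\}$.

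\emph{Step 1: $R$ is Zariski closed.} The condition $\rank(T^{(k)})\le D_k$ is equivalent to the vanishing of all $(D_k+1)\times(D_k+1)$ minors of the flattening matrix $T^{(k)}$. These minors are homogeneous polynomials in the coefficients $c_{j_1,\dots,j_N}$, so each rank condition cuts out a projective subvariety of $X$. The set $R$ is a finite intersection of such subvarieties and is therefore Zariski closed.

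\emph{Step 2: $\overline{\Im(\P\widetilde\Phi_{\boldsymbol D,\boldsymbol d})}^{\,Zar}\subseteq R$.} Take $T=\widetilde\Phi_{\boldsymbol D,\boldsymbol d}(A^1,\dots,A^N)$ and fix $k$. Group the contraction at bond $k$:
\[
T^{(k)} = L_k R_k,
\]
where $L_k$ is the $\bigl(\prod_{i\le k}d_i\bigr)\times D_k$ matrix with rows $A^1_{j_1}\cdots A^k_{j_k}$ (recall $D_0=1$), and $R_k$ is the $D_k\times\prod_{i>k}d_i$ matrix with columns $A^{k+1}_{j_{k+1}}\cdots A^N_{j_N}$ (recall $D_N=1$). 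This factorization gives $\rank(T^{(k)})\le D_k$. Hence $\Im(\P\widetilde\Phi_{\boldsymbol D,\boldsymbol d})\subseteq R$, and since $R$ is closed by Step 1, the Zariski closure is contained in $R$ as well.

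\emph{Step 3: $R\subseteq \Im(\P\widetilde\Phi_{\boldsymbol D,\boldsymbol d})$.} This is the main step. Let $[T]\in R$ with $T\neq 0$. We construct an MPS by successive factorizations, as in the standard TT-SVD, but using the full-tensor flattenings rather than the intermediate remainders so that the rank bounds transfer directly.

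For each $k$, set $r_k:=\rank(T^{(k)})\le D_k$, with $r_0=r_N=1$. Choose a matrix $U_k$ of size $\bigl(\prod_{i\le k}d_i\bigr)\times r_k$ whose columns form a basis of the column space $\operatorname{col}(T^{(k)})$.

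The key observation is the inclusion
\[
\operatorname{col}(T^{(k)})\subseteq \operatorname{col}(T^{(k-1)})\otimes\C^{d_k}.
\]
To see it, note that each column of $T^{(k)}$ is obtained by fixing $j_{k+1},\dots,j_N$. Splitting off the index $j_k$, this column becomes a sum over $j_k$ of (a column of $T^{(k-1)}$) $\otimes\, e_{j_k}$.

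Consequently there exist cores $B^k$ of size $r_{k-1}\times d_k\times r_k$ with
\[
U_k = (U_{k-1}\otimes \id_{d_k})\,B^k,
\]
where $U_0=1$. For the last site, since $\operatorname{col}(T^{(N-1)})$ contains $T$ viewed as a vector, we have $T=(U_{N-1}\otimes\id_{d_N})\,B^N$ for a suitable $B^N$ of size $r_{N-1}\times d_N\times 1$.

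Unrolling this recursion expresses $T$ as an MPS with bond dimensions $r_k\le D_k$.

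\emph{Step 4: padding to bond dimensions $\boldsymbol D$.} Embed each core $B^k$ into $\C^{D_{k-1}}\otimes\C^{d_k}\otimes\C^{D_k}$ by filling the extra entries with zeros. The contraction value is unchanged, because the padded blocks satisfy $\begin{pmatrix}B&0\\0&0\end{pmatrix}\begin{pmatrix}B'&0\\0&0\end{pmatrix}=\begin{pmatrix}BB'&0\\0&0\end{pmatrix}$.

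Since $T\neq0$, this yields $[T]=\P\widetilde\Phi_{\boldsymbol D,\boldsymbol d}(A)$ for the padded cores $A$. Thus $R$ is contained in the image, and a fortiori in its closure. Combined with Step 2, this proves $V_{\boldsymbol D,\boldsymbol d}=R$.

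Incidentally, the argument shows that the image of $\P\widetilde\Phi_{\boldsymbol D,\boldsymbol d}$ is already closed. Admissibility of $\boldsymbol D$ is not needed here: it only matters for the full-rank stratum, not for this set-theoretic statement.
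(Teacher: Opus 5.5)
Your proof is correct and follows the route the paper sketches before deferring to Borovik et al.\ (Lemma 2.2): necessity from the factorization $T^{(k)}=L_kR_k$, and sufficiency from successive factorizations of the flattenings, closed off by the minor equations. Your version of the sufficiency step (nested column spaces plus zero-padding) is a clean way to carry this out; the one slip is cosmetic: at the last site you mean $T\in\operatorname{col}(T^{(N-1)})\otimes\C^{d_N}$, i.e.\ the key inclusion with $k=N$, not that $\operatorname{col}(T^{(N-1)})$ itself contains $T$.
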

\noindent
A proof is given in \cite[Lemma 2.2]{Borovik_2025}. The variety $V_{\boldsymbol D,\boldsymbol d}$ contains several rank strata. We define the \emph{full-rank MPS stratum} as the subset of MPS for which each flattening attains maximal rank:
\begin{equation*}
V^{=}_{\boldsymbol D,\boldsymbol d}
:=
\left\{
[T]\in X \,\middle|\,
\rank\bigl(T^{(k)}\bigr)=D_k,k=1,\dots,N-1
\right\}.
\end{equation*}
By successive singular value decompositions (SVD), every element of $V^{=}_{\boldsymbol D,\boldsymbol d}$ admits a left-canonical decomposition with the prescribed bond dimensions. Hence the map 
\begin{equation*}
    \Phi_{\boldsymbol D,\boldsymbol d}:\mathcal P_{\boldsymbol D,\boldsymbol d}\to V^{=}_{\boldsymbol D,\boldsymbol d} \subseteq X
\end{equation*}
is surjective and we write $\Im(\Phi_{\boldsymbol D,\boldsymbol d}) = V^{=}_{\boldsymbol D,\boldsymbol d}$.
\begin{proposition}\label{prop:smoothness_full_rank}
    $V^{=}_{\boldsymbol D,\boldsymbol d}$ is a smooth Zariski open subset of the projective variety $V_{\boldsymbol D,\boldsymbol d}$.
\end{proposition}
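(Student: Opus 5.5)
The proof has two parts: openness, then smoothness.

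\emph{Zariski openness.} By Proposition~\ref{prop:rank_description_variety}, $V_{\boldsymbol D,\boldsymbol d}$ is cut out in $X$ by the vanishing of all $(D_k+1)\times(D_k+1)$ minors of the flattenings $T^{(k)}$. These minors are homogeneous in the coefficients $c_{j_1,\dots,j_N}$. Inside $V_{\boldsymbol D,\boldsymbol d}$, the condition $\rank(T^{(k)})=D_k$ is equivalent to the non-vanishing of at least one $D_k\times D_k$ minor of $T^{(k)}$. Hence
\begin{equation*}
V^{=}_{\boldsymbol D,\boldsymbol d}=V_{\boldsymbol D,\boldsymbol d}\setminus\bigcup_{k=1}^{N-1}Z\bigl(\text{all } D_k\times D_k \text{ minors of } T^{(k)}\bigr),
\end{equation*}
which is the complement of a closed set, i.e.\ Zariski open in $V_{\boldsymbol D,\boldsymbol d}$. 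It is nonempty for admissible $\boldsymbol D$: a generic point of $\mathcal P_{\boldsymbol D,\boldsymbol d}$ maps into it by the surjectivity statement above.

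\emph{Smoothness.} The plan is to show that $V^{=}_{\boldsymbol D,\boldsymbol d}$ is locally the image of a smooth quotient. Concretely, I would identify it with the orbit space of a free, proper group action. I would work with the affine parameter space for algebraic convenience. Let $\mathcal A^{=}\subset\mathcal A_{\boldsymbol D,\boldsymbol d}$ be the open set of tuples whose left partial products $A^1\cdots A^k$ (viewed as $d_1\cdots d_k\times D_k$ matrices) and right partial products all have full rank $D_k$. The group $\prod_{k=1}^{N-1}\mathrm{GL}(D_k)\times\C^*$ acts on $\mathcal A^{=}$ by the usual gauge transformations together with overall scaling.

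The key step is to show that the fibres of $\P\widetilde\Phi$ over $V^{=}$ are exactly single orbits, and that stabilizers are trivial modulo the scalar ambiguity. This is the standard uniqueness of full-rank MPS up to gauge. If two full-rank representations give the same $T$, then the column spaces of $T^{(k)}$ coincide. This yields a unique invertible $G_k$ relating the left partial products, and one then checks inductively that these $G_k$ intertwine the site tensors. Freeness follows because full rank of the partial products forces $G_k=\id$ whenever the tensors are fixed.

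From this I would compute the differential of $\P\widetilde\Phi$ at a point of $\mathcal A^{=}$ and show that its kernel is exactly the tangent space to the orbit, i.e.\ the infinitesimal gauge directions. The rank of the differential is then the constant number
\begin{equation*}
\dim\mathcal A_{\boldsymbol D,\boldsymbol d}-\sum_{k=1}^{N-1}D_k^2-1.
\end{equation*}
A constant-rank map whose fibres are exactly orbits has image an immersed submanifold of that dimension. Since the action is free and proper on the full-rank locus (the gauge acts via the Stiefel/Grassmannian structure), the image is in fact embedded, and it is locally closed of dimension equal to $\dim V_{\boldsymbol D,\boldsymbol d}$. A variety that is a complex submanifold of the expected dimension near a point is smooth there. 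Alternatively, one could realize $V^{=}$ as an iterated fibration of Grassmannians, $\mathrm{Gr}(D_1,d_1)$, then $\mathrm{Gr}(D_2,D_1d_2)$ and so on, with a projective space of $C$'s at the end. This structure is visible from the left-canonical parametrization modulo $G_{MPS}$, and it would give smoothness directly.

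\emph{Main obstacle.} The hardest step is showing that the kernel of the differential equals the gauge directions at every point of the full-rank stratum, not just generically. I would first try to prove this by induction on $N$. The key observation is that a tangent vector $\sum_i A^1\cdots\delta A^i\cdots A^N$ that vanishes projects, via the full-rank left and right environments, onto each bond. There it forces $\delta A^i=X_{i-1}A^i-A^iX_i$ for some matrices $X_k$. This argument uses precisely the injectivity of the left and right environments guaranteed by $\rank T^{(k)}=D_k$.
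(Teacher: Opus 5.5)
Your openness argument matches the paper's. For smoothness, the paper uses no quotient and no differential. For a rank-$r$ matrix with a non-vanishing minor $M_{I,J}$, it takes the unique gauge-fixed factorization $M=AB$ with $A=M_{*,J}M_{I,J}^{-1}$ (so $A_{I,*}=I_r$) and $B=M_{I,*}$. It applies this successively to $T^{(1)}$, then to the reshaped remainder $T_2$, and so on. On the open set where the chosen minors are non-zero, the cores $(A_1,\dots,A_{N-1},T_N)$ are rational functions of $T$ and conversely. This exhibits $\widetilde V^{=}_{\boldsymbol D,\boldsymbol d}$ locally as an open subset of affine space. Your ``alternative'' via an iterated Grassmannian fibration is essentially this argument and would work, but you only sketch it.

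Your main route has a genuine gap at the sentence ``since the action is free and proper, the image is in fact embedded.'' Freeness and properness make $Q=\mathcal A^{=}/(\prod_k\mathrm{GL}(D_k)\times\C^*)$ a manifold. Constant rank with fibres equal to orbits then makes $Q\to\P(\mathcal H)$ an injective immersion. That does not make the image an embedded submanifold, and local closedness plus the correct dimension do not rescue it. For instance, $t\mapsto\bigl(t^2-1,\,t(t^2-1)\bigr)$ on $\C\setminus\{-1\}$ (trivial group, so trivially free and proper) is an injective immersion. Its image is the whole nodal cubic $y^2=x^2(x+1)$, a closed curve of the expected dimension that is singular at the origin.

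What is missing is a property of the map $Q\to V^{=}_{\boldsymbol D,\boldsymbol d}$, not of the action on the source: its inverse must be continuous. Equivalently, a factorization of $T$ must be choosable locally continuously in $T$ on the constant-rank locus. This holds, for example, because $T\mapsto\operatorname{colspan}T^{(k)}\in\mathrm{Gr}(D_k,d_1\cdots d_k)$ is continuous where the rank is constant, or via the minor-based formulas above. Supplying that local section is precisely the content of the paper's proof. Once the section is algebraic it gives charts directly, so the differential computation you flag as the main obstacle becomes unnecessary.
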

\begin{proof}
    See Appendix \ref{app:proofs_smoothness_full_rank}.
\end{proof}
Since $V^{=}_{\boldsymbol D,\boldsymbol d}$ is a smooth part of the algebraic variety $V_{\boldsymbol D,\boldsymbol d}$, it is a smooth manifold. In what follows, $V^{=}_{\boldsymbol D,\boldsymbol d}$ plays the role of the variational manifold $\mathcal M$ from the beginning of this section. 

\subsection{Tangent space of MPS variety}\label{section:tangent_space_MPS}
We now turn to the study of the differential-geometric aspects of $V^{=}_{\boldsymbol D,\boldsymbol d}$.
The map $\Phi_{\boldsymbol D,\boldsymbol d}$ is surjective onto the full-rank stratum $V^{=}_{\boldsymbol D,\boldsymbol d}$, but not injective due to the action of the gauge group $G_{MPS}$.
Therefore, the tangent space of the parameter manifold $T_p\mathcal P_{\boldsymbol D,\boldsymbol d}$ naturally splits into directions that are tangent to the gauge orbits, and directions, which are orthogonal to the gauge orbits.
\\
To make this more explicit, fix a point 
\begin{equation*}
    p=(M^1,\dots,M^{N-1},[C])\in \mathcal P_{\boldsymbol D,\boldsymbol d}.
\end{equation*}
After choosing a normalized representative $C$ with $\|C\|_F=1$, we equip $T_p\mathcal P_{\boldsymbol D,\boldsymbol d}$ with the product metric given by the Frobenius inner product on the Stiefel factors and the Fubini--Study metric on the projective factor. With respect to this metric, we define
\begin{equation*}
T_p\mathcal P_{\boldsymbol D,\boldsymbol d}
=
\ker(D_p\Phi_{\boldsymbol D,\boldsymbol d})^\perp \oplus \ker(D_p\Phi_{\boldsymbol D,\boldsymbol d}).
\end{equation*}
The subspace $\ker(D_p\Phi_{\boldsymbol D,\boldsymbol d})$ is called the \emph{vertical tangent space}, while its orthogonal complement $\ker(D_p\Phi_{\boldsymbol D,\boldsymbol d})^\perp$ is called the \emph{horizontal tangent space}.  The vertical tangent space consists precisely of the tangent directions along the fiber of $\Phi_{\boldsymbol D,\boldsymbol d}(p)$, equivalently along the gauge orbit through $p$. More precisely, we have
\begin{equation*}
    \ker(D_p\Phi_{\boldsymbol D,\boldsymbol d})
=
T_p\bigl(\Phi_{\boldsymbol D,\boldsymbol d}^{-1}(\Phi_{\boldsymbol D,\boldsymbol d}(p))\bigr).
\end{equation*}
Differentiating the gauge action at the identity yields vertical tangent vectors 
\begin{equation*}
    (\dot M, \dot C) := (\dot M^1, \dots, \dot M^{N-1},[\dot C])\in \ker(D_p\Phi_{\boldsymbol D,\boldsymbol d})
\end{equation*}
with
\begin{align}\label{eq:vertical_directions_explicit}
\dot M^1_{j_1} &= M^1_{j_1}X_1, \nonumber\\
\dot M^i_{j_i} &= M^i_{j_i}X_i - X_{i-1}M^i_{j_i},
\qquad i=2,\dots,N-1, \nonumber\\
\dot C_{j_N} &= -X_{N-1}C_{j_N},
\end{align}
and
\begin{equation*}
    X_i:=\dot G_i(0)\in \mathfrak u(D_i),
\qquad X_i^*=-X_i.
\end{equation*}
Hence,
\begin{equation*}
\ker(D_p\Phi_{\boldsymbol D,\boldsymbol d})\cong \bigoplus_{i=1}^{N-1}\mathfrak u(D_i),
\end{equation*}
and in particular
\begin{equation*}
\dim_{\R}\ker(D_p\Phi_{\boldsymbol D,\boldsymbol d})=\sum_{i=1}^{N-1}D_i^2.
\end{equation*}
A tangent vector 
\begin{equation*}
    (\delta M,\delta C) := (\delta M^1, ..., \delta M^{N-1}, [\delta C])\in T_p\mathcal P_{\boldsymbol D,\boldsymbol d},
\end{equation*}
with first-order tensor variations $[\delta C]$ and $\delta M^i$ at site $i$, is horizontal if it is orthogonal, with respect to the aforementioned product metric, to all vertical tangent vectors. Thus, the inner product
\begin{equation*}
\langle (\delta M,\delta C),(\dot M,\dot C)\rangle = 0
~
\text{for all }(\dot M,\dot C)\in\ker(D_p\Phi_{\boldsymbol D,\boldsymbol d})
\end{equation*}
defines Hermiticity conditions such that orthogonality is satisfied.

\begin{proposition}\label{prop:horizontal_conditions}
A tangent vector
\begin{equation*}
    (\delta M,\delta C)\in T_p\mathcal P_{\boldsymbol D,\boldsymbol d}
\end{equation*}
is horizontal if and only if the matrices
\begin{equation*}
Y_i
:=
\sum_{j_i=1}^{d_i}(\delta M^i_{j_i})^*M^i_{j_i}
-
\sum_{j_{i+1}=1}^{d_{i+1}}M^{i+1}_{j_{i+1}}(\delta M^{i+1}_{j_{i+1}})^*,
\end{equation*}
for $i=1,\dots,N-2$ and 
\begin{equation*}
    Y_{N-1}
:=
\sum_{j_{N-1}=1}^{d_{N-1}}(\delta M^{N-1}_{j_{N-1}})^*M^{N-1}_{j_{N-1}}
-
\sum_{j_N=1}^{d_N}C_{j_N}(\delta C_{j_N})^*
\end{equation*}
are Hermitian.
\end{proposition}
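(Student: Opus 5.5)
The plan is to compute the inner product of an arbitrary tangent vector $(\delta M,\delta C)$ with a general vertical vector $(\dot M,\dot C)$ from \eqref{eq:vertical_directions_explicit}. I would then collect the result as a sum over $i$ of pairings of $Y_i$ with the gauge generator $X_i$, and use the fact that the $X_i\in\mathfrak u(D_i)$ range independently over all anti-Hermitian matrices. The metric is real-valued: it is the real part of the Frobenius (resp.\ Fubini--Study) Hermitian form. So I take
\[
\langle(\delta M,\delta C),(\dot M,\dot C)\rangle=\mathrm{Re}\Bigl(\sum_{i=1}^{N-1}\sum_{j_i}\tr\bigl((\delta M^i_{j_i})^*\dot M^i_{j_i}\bigr)+\sum_{j_N}\tr\bigl((\delta C_{j_N})^*\dot C_{j_N}\bigr)\Bigr).
\]
On the projective factor, with the normalized representative $\|C\|_F=1$ and the tangent vector represented by $\delta C$ orthogonal to $C$, the Fubini--Study metric reduces to the real Frobenius pairing. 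The vertical vector $\dot C=-X_{N-1}C$ may need to be projected to the horizontal lift, $\dot C-\langle C,\dot C\rangle C$. This projection drops out because $\delta C\perp C$. I will state this convention explicitly first.

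Next I substitute $\dot M^i_{j_i}=M^i_{j_i}X_i-X_{i-1}M^i_{j_i}$, with the conventions $X_0=0$ and $\dot C_{j_N}=-X_{N-1}C_{j_N}$. By cyclicity of the trace, $\tr((\delta M^i_{j})^*M^i_{j}X_i)=\tr(A_iX_i)$ with $A_i:=\sum_j(\delta M^i_j)^*M^i_j$. Likewise $\tr((\delta M^{i+1}_j)^*X_iM^{i+1}_j)=\tr(B_iX_i)$ with $B_i:=\sum_j M^{i+1}_j(\delta M^{i+1}_j)^*$, and the $C$ term is treated the same way. Regrouping by $X_i$, the inner product becomes $\sum_{i=1}^{N-1}\mathrm{Re}\,\tr(Y_iX_i)$, with $Y_i$ exactly as in the statement.

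It then remains to show that $\mathrm{Re}\,\tr(YX)=0$ for all anti-Hermitian $X$ if and only if $Y$ is Hermitian. Write $Y=H+K$ with $H$ Hermitian and $K$ anti-Hermitian. Then $\tr(HX)$ is purely imaginary, since $\overline{\tr(HX)}=\tr(X^*H^*)=-\tr(HX)$, while $\tr(KX)$ is real. So $\mathrm{Re}\,\tr(YX)=\tr(KX)$. Choosing $X=K$ gives $\tr(K^2)=-\|K\|_F^2$, which forces $K=0$. Because the $X_i$ are independent, horizontality is equivalent to each $Y_i$ being Hermitian.

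The main obstacle is the bookkeeping of conventions, not the algebra. One issue is ensuring the real part is taken, since the horizontal space is defined via a real Riemannian metric. Another is the projective factor, where one must check that the Fubini--Study pairing at a normalized $C$ agrees with $\mathrm{Re}\,\tr(\delta C^*\dot C)$ on the relevant subspace. A third is the Stiefel tangent constraints, which are consistent but not needed for the pairing computation.
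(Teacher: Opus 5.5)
Your proposal is correct and follows the paper's proof. Both pair $(\delta M,\delta C)$ with the gauge-generated vertical vectors, regroup by the generators into $\sum_i \mathrm{Re}\,\tr(X_iY_i)$, and use the independence of the $X_i\in\mathfrak u(D_i)$. Both then conclude from the fact that $\mathrm{Re}\,\tr(XY)=0$ for all anti-Hermitian $X$ exactly when $Y$ is Hermitian. You also correctly justify two points the paper only asserts: that fact itself, via the Hermitian/anti-Hermitian splitting, and the reduction of the Fubini--Study pairing to $\mathrm{Re}\,\tr(\delta C^*\dot C)$ when $\delta C\perp C$.
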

\begin{proof}
    See Appendix \ref{app:proofs_horizontal_conditions}.
\end{proof}

\section{Linearised MPS excitations on the tangent space}\label{sec:lin_tangent_method}
The above sections define the variational manifold and its tangent space. In this section, we introduce methods for the approximation of excitation spectra using vectors in a horizontal tangent space $\ker(D_p\Phi_{\boldsymbol D,\boldsymbol d})^{\perp}$. 
\\
These methods are based on the time-dependent variational principle (TDVP) \cite{kramer1981}. We first outline TDVP and then explain its relation to the reconstruction of excitation spectra. Note that the use of TDVP on MPS tangent spaces for the approximation of excitation spectra has been pioneered in \cite{kloss_2011, haegeman_2011}.

\subsection{Time-dependent variational principle}\label{sec:TDVP}

The TDVP, in a nutshell, provides a way to approximate the time-dependent Schr\"odinger equation (TDSE) while constraining the state to a prescribed variational family. We begin with the TDSE
\begin{equation*}
\partial_t\psi(t)=-iH\psi(t),
\qquad
\psi(t)\in\mathcal H,
\end{equation*}
for a Hamiltonian $H$.
In the exact dynamics, the state $\psi(t)$ evolves in the full Hilbert space $\mathcal H$. In the variational setting, however, we restrict the evolution to the full-rank MPS locus $V^{=}_{\boldsymbol D,\boldsymbol d}$.
Equivalently, we consider curves on the left-canonical parameter space $\mathcal P_{\boldsymbol D,\boldsymbol d}$ and their image under the map $\Phi_{\boldsymbol D,\boldsymbol d}$.
Let $p\in \mathcal P_{\boldsymbol D,\boldsymbol d}$ and $[\psi]=\Phi_{\boldsymbol D,\boldsymbol d}(p)\in V^{=}_{\boldsymbol D,\boldsymbol d}$, and consider a variation in parameter space, that is, a horizontal tangent vector
\begin{equation*}
    \delta p=(\delta M,\delta C)\in T_p\mathcal P_{\boldsymbol D,\boldsymbol d}.
\end{equation*}
Such a variation induces a tangent vector via the differential map pushforward
\begin{equation*}
    [\delta\psi]=D_p\Phi_{\boldsymbol D,\boldsymbol d}(\delta p)\in T_{[\psi]}V^{=}_{\boldsymbol D,\boldsymbol d}.
\end{equation*}
In coordinates, the differential of $\Phi_{\boldsymbol D,\boldsymbol d}$ is given by
\begin{align*}
&(D_p\Phi_{\boldsymbol D,\boldsymbol d}(\delta p))_{j_1, \dots, j_N} \\
&
\qquad =\Big(
\sum_{i=1}^{N-1}L_{i-1}\,\delta M^i_{j_i}\,R_{i+1}
+ L_{N-1}\,\delta C_{j_N}
\Big), \nonumber
\end{align*}
where $L_i=M^1_{j_1}\cdots M^i_{j_i}$, $R_i=M^i_{j_i}\cdots M^{N-1}_{j_{N-1}}C_{j_N}$, $L_0=1$ and $R_N = C_{j_N}$.
\\~\\
To obtain a representative of a physical tangent vector $[\delta \psi]$, that is, a tangent direction that does not just reflect gauge rotations but a physical change of the state $\psi$, we restrict to horizontal parameter variations $\delta p$. For the state $[\psi] \in V^{=}_{\boldsymbol D,\boldsymbol d}$ we know that the representatives $\psi \simeq \lambda \psi$ define the same state for $\lambda \in \C$. Therefore, the line $\mathrm{span}_{\C}(\psi)$ should not be a physical direction. We remove these directions by considering only representatives $\{\delta\psi\in\mathcal H:\langle \psi,\delta\psi\rangle=0\}$.
\\
Now, the idea of TDVP is to replace the exact time derivative $-iH\psi$ by the tangent vector in $T_{[\psi]}V^{=}_{\boldsymbol D,\boldsymbol d}$ that yields the best approximation to the exact evolution. More precisely, among representatives $\delta\psi$ of tangent vectors $[\delta\psi]\in T_{[\psi]}V^{=}_{\boldsymbol D,\boldsymbol d}$, one seeks the one minimizing the residual
\begin{equation*}
\delta\psi^\ast
=
\arg\min_{\delta\psi}
\|\delta\psi+iH\psi\|^2.
\end{equation*}
Equivalently, $\delta\psi^\ast$ is the orthogonal projection of the exact Schr\"odinger evolution onto the tangent space of the variational manifold \cite{Haegeman_2014}. Let $R^\perp := \delta\psi^\ast + iH\psi$ be the residual error. Identifying $\delta\psi^\ast = \partial_t\psi$ we require
\begin{equation}\label{eq:dirac_frenkel}
\langle \delta\psi,(i\partial_t-H)\psi\rangle = 0
\qquad
\text{for all }
\delta\psi \in T_{[\psi]}V^{=}_{\boldsymbol D,\boldsymbol d},
\end{equation}
which is the so-called \emph{Dirac-Frenkel condition} \cite{yuan2019}.
\\
This is equivalently obtained from the \emph{TDVP Lagrangian} (setting \(\hbar=1\))
\begin{equation}\label{eq:TDVP_lagrangian}
\mathcal L(\psi,\dot\psi)
=
\frac{i}{2}
\bigl(
\langle \psi,\dot\psi\rangle
-
\langle \dot\psi,\psi\rangle
\bigr)
-
\langle \psi,H\psi\rangle,
\end{equation}
since its stationary solutions satisfy the projected Euler--Lagrange (EL) equations
\begin{equation*}
\langle \delta\psi,(i\partial_t-H)\psi\rangle=0
\qquad
\text{for all }
\delta\psi\in T_{[\psi]}V^{=}_{\boldsymbol D,\boldsymbol d}.
\end{equation*}
Thus, the TDVP yields the dynamics of the TDSE induced on the variational manifold by projecting the full time evolution onto its physical tangent space. Solving the EL then gives us the best approximation of the time evolution on the MPS variety. The schematic of the TDVP and the projection onto the variational tangent space are illustrated in Figure \ref{fig:TDVP}, where we chose the real cone of the variety $V^=_{(1,1,1), (2, 2)}$ for illustration.
\begin{figure}
    \centering
    \includegraphics[width=1\linewidth]{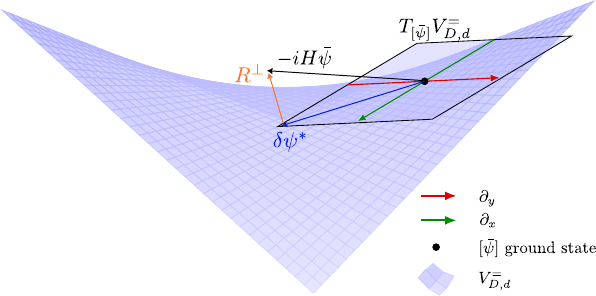}
    \caption{Illustration of tangent MPS method with TDVP. The variety here is the real cone of $V^{=}_{\boldsymbol D,\boldsymbol d}= V^=_{(1,1,1), (2, 2)}$.}
    \label{fig:TDVP}
\end{figure}

\subsection{Extraction of excitation spectra}\label{sec:linear_tangent_mps}
So far, the TDVP allows us to approximate the exact time evolution by restricting it to $V^{=}_{\boldsymbol D,\boldsymbol d}$. We will now outline how we can use time evolution for the extraction of excitation spectra for a given Hamiltonian $H$. We consider linear perturbations around the ground state (GS) such that we obtain a \emph{linear MPS tangent method}. 
\\~\\
Let $\bar p\in \mathcal P_{\boldsymbol{D},\boldsymbol d}$ be a stationary point of the Rayleigh-Ritz quotient
\begin{equation*}
    f(\psi) = \frac{\langle \psi, H\psi \rangle}{\langle\psi,\psi\rangle}, \quad \langle\psi,\psi\rangle = 1, \quad \psi \in \Phi_{\boldsymbol D,\boldsymbol d}(\bar p)
\end{equation*}
with minimal variational energy \(\omega_0\), but not necessarily $H\bar\psi=\omega_0\bar\psi$.
Such a stationary state can be obtained via MPS variational methods like DMRG \cite{Schollwoeck_2010, White1992}.
\\~\\
We choose a real basis \(\{h_1,\dots,h_r\}\) of the horizontal subspace
\[
\ker(D_{\bar p}\Phi_{\boldsymbol D,\boldsymbol d})^\perp
\subset T_{\bar p}\mathcal P_{\boldsymbol D,\boldsymbol d}.
\]
This is done by constructing a matrix \(R\) whose nullspace enforces all
conditions required for horizontal tangent directions and
\(\dim_{\mathbb R}(\ker(R)) = r\) (see Proposition
\ref{prop:horizontal_conditions}). The vectors $h_i$ are exactly a basis of
$\ker(R)$ such that their image under the differential gives tangent
representatives
\begin{equation*}
    \Theta_a:=D_{\bar p}\Phi_{\boldsymbol D,\boldsymbol d}(h_a)
    \in T_{[\bar\psi]}V^{=}_{\boldsymbol{D},\boldsymbol d},
\qquad a=1,\dots,r.
\end{equation*}
Collecting these vectors as columns defines the matrix
\begin{equation}\label{eq:phys_tangent_matrix}
F=\bigl[\Theta_1\ \cdots\ \Theta_r\bigr]
\in
\C^{(d_1\times\cdots\times d_N)\times r}.
\end{equation}
Although the horizontal parameter tangent space is real, the spectral problem
uses the complex linear span of its Hilbert-space representatives, that is,  $\operatorname{span}_{\C}\{\Theta_1,\dots,\Theta_r\}$.
Thus any time-dependent vector in this complexified tangent space can be
written as
\begin{equation*}
\delta\psi(t)=\sum_{a=1}^r \xi_a(t)\Theta_a = F\,\xi(t),
\end{equation*}
where $\xi(t)=(\xi_1(t),\dots,\xi_r(t))^T\in\C^r$ are time-dependent tangent coordinates.
\\
For some $\epsilon >0$, the linear perturbation around the stationary state has the form
\begin{equation*}
\psi_\varepsilon(t)
=
e^{-i\omega_0 t}\bigl(\bar\psi+\varepsilon\,\delta\psi(t)\bigr),
\end{equation*}
where $e^{-i\omega_0 t}$ denotes the phase rotation for the stationary state $\bar \psi$.
We plug this form into the TDVP Lagrangian \eqref{eq:TDVP_lagrangian} and expand the Lagrangian in powers of \(\varepsilon\): the zeroth-order term vanishes
\begin{equation*}
    \mathcal{L}^{(0)} = \omega_0 - \langle \bar{\psi}, H\bar{\psi}\rangle = 0,
\end{equation*}
the first-order term vanishes $\mathcal{L}^{(1)} = 0$, since $\langle \bar\psi,\delta\psi(t)\rangle=0$ and since $\bar \psi$ is stationary, and the second-order part reads 
\begin{equation*}
\mathcal L^{(2)}
=
\frac{i}{2}
\bigl[
\langle \delta\psi,\delta\dot\psi\rangle
-
\langle \delta\dot\psi,\delta\psi\rangle
\bigr]
-
\langle \delta\psi,H'\delta\psi\rangle,
\end{equation*}
with $H' = (H-\omega_0)$.
Substituting \(\delta\psi(t)=F\xi(t)\) gives
\begin{equation*}
\mathcal L^{(2)}(\xi,\dot\xi)
=
\frac{i}{2}
\Bigl[
\xi^\dagger A\dot\xi
-
\dot\xi^\dagger A\xi
\Bigr]
-
\xi^\dagger B\xi,
\end{equation*}
where
\begin{align*}
A_{ab}=\langle \Theta_a,\Theta_b\rangle,
\qquad
B_{ab}=\langle \Theta_a,H'\Theta_b\rangle,
\end{align*}
such that 
\begin{align*}
    A=F^\dagger F,
\qquad
B=F^\dagger H'F.
\end{align*}
Treating $\xi$ and $\xi^\dagger$ as independent variables, the Euler--Lagrange (EL) equations yield
\begin{align*}
\frac{\partial\mathcal{L}^{(2)}}{\partial\xi^\dagger}-\frac{d}{dt}\frac{\partial\mathcal{L}^{(2)}}{\partial\dot\xi^\dagger} &= 0\quad\Leftrightarrow \quad i\,A\,\dot\xi = B\,\xi, 
\end{align*}
where the equations for $\xi^\dagger$ follow from Hermitian conjugation.
We obtain ordinary differential equations that are decoupled in $\xi$ and $\xi^\dagger$. For linear perturbations it therefore suffices to look for harmonic solutions of the form
\begin{equation*}
\xi(t)=x\,e^{-i\omega t},
\qquad x\in\C^r,
\end{equation*}
such that we obtain the generalized eigenvalue problem
\begin{equation}\label{eq:GW_generalized_eigenproblem}
Bx=\omega Ax.
\end{equation}
The eigenvalue $\omega$ is the phase rotation frequency of a tangent vector corresponding to linear perturbations around the GS $\bar\psi$. It provides a first order approximation to the excitation spectrum.
\\~\\
Note that the vectors \(\Theta_a\) obtained from horizontal parameter variations need not be linearly independent in the Hilbert space. Therefore, before solving \eqref{eq:GW_generalized_eigenproblem}, it is necessary to compress the basis. Taking a singular value decomposition $F=U\Sigma V^\dagger$, we retain only those columns corresponding to singular values above a certain threshold (e.g. $10^{-10}$). This gives an orthonormal basis $U = \bigl[u_1,\dots,u_s\bigr]$ of the physical tangent space with $s \leq r$. Writing the fluctuation as
\begin{equation*}
    \delta\psi(t)=U\eta(t),
\qquad \eta(t)\in\C^s,
\end{equation*}
the overlap matrix $A$ becomes the identity and the quadratic Lagrangian simplifies to
\begin{equation*}
\mathcal L^{(2)}(\eta,\dot\eta)
=
\frac{i}{2}
\Bigl[
\eta^\dagger \dot\eta
-
\dot\eta^\dagger \eta
\Bigr]
-
\eta^\dagger B_{\mathrm{red}} \eta,
\end{equation*}
with $B_{\mathrm{red}}
=
U^\dagger H' U$.
The reduced equations of motion from the Euler--Lagrange equations are then $i\,\dot\eta = B_{\mathrm{red}}\eta$ such that a harmonic ansatz $
\eta(t)=y\,e^{-i\omega t}$ gives the reduced Hermitian eigenvalue problem
\begin{equation}\label{eq:red_eigval_problem}
B_{\mathrm{red}}y=\omega y.
\end{equation}
In Appendix \ref{sec:comp_costs} we give a detailed description of the computational costs for obtaining and diagonalizing the reduced eigenvalue problem.

\section{Reconstructed excitation spectra}\label{sec:results}
Using the linearised MPS tangent-space method, we now reconstruct the (low-lying) excitation spectrum of the Bose--Hubbard (BH) model. We consider an open chain of $N=10$ sites, for which exact diagonalisation remains feasible and therefore provides a stringent benchmark of the reconstructed spectrum. The tangent-space calculation is performed using the bond pattern $\boldsymbol{D} = [3, 8, \dots, 8, 3]$, that is, with a maximal bond dimension $D = 8$. More generally, the method can be applied to larger systems by choosing a bond dimension sufficient to obtain the desired precision of the reconstructed (low-energy) modes. We stress that the applicability of the method is not restricted to the BH model, but extends to second-quantised Hamiltonians that admit an efficient matrix product operator (MPO) representation. We consider the Bose--Hubbard Hamiltonian with interactions limited to nearest neighbours
\begin{align}
    H
    &= -J\sum_{i=1}^{N-1}
    \left(
        a_i^{\dagger}a_{i+1}
        + \mathrm{h.c.}
    \right)
    + \frac{U}{2}\sum_{i=1}^{N} n_i(n_i-1) \nonumber\\
     &\qquad- \mu\sum_{i=1}^{N}n_i
    + \eta\sum_{i=1}^{N}
    \left(
        a_i^{\dagger}+a_i
    \right),
    \label{eq:bose_hubbard}
\end{align}
where $n_i=a_i^{\dagger}a_i$, $J$ is the hopping amplitude, $U$ is the on-site interaction, and $\mu$ is the chemical potential. The source term proportional to $\eta$ explicitly breaks the $\U(1)$-symmetry and hence particle-number conservation. Both $J$ and $\mu$ are in units of $U$. In the following we choose $\mu = 0.5$, $U=1$ and sweep over values of $J\in[0.01, 0.3]$. We truncate the local bosonic Hilbert space to occupations $n_i = 0, 1, 2$, so $d=3$.
\\
Figure \ref{fig:N_10_D_8_d_3_mu_0.5} compares the exact and reconstructed low-energy spectra. For $\eta=0$ (Figure \ref{fig:N_10_D_8_d_3_mu_0.5}, top), the Hamiltonian conserves particle number, so the eigenstates can be resolved into particle-number sectors. We denote the particle number of the ground state by $M_0$, and the particle number of excitations by $M$.
\begin{figure}[h!]
    \centering
    \includegraphics[width=0.49\textwidth]
    {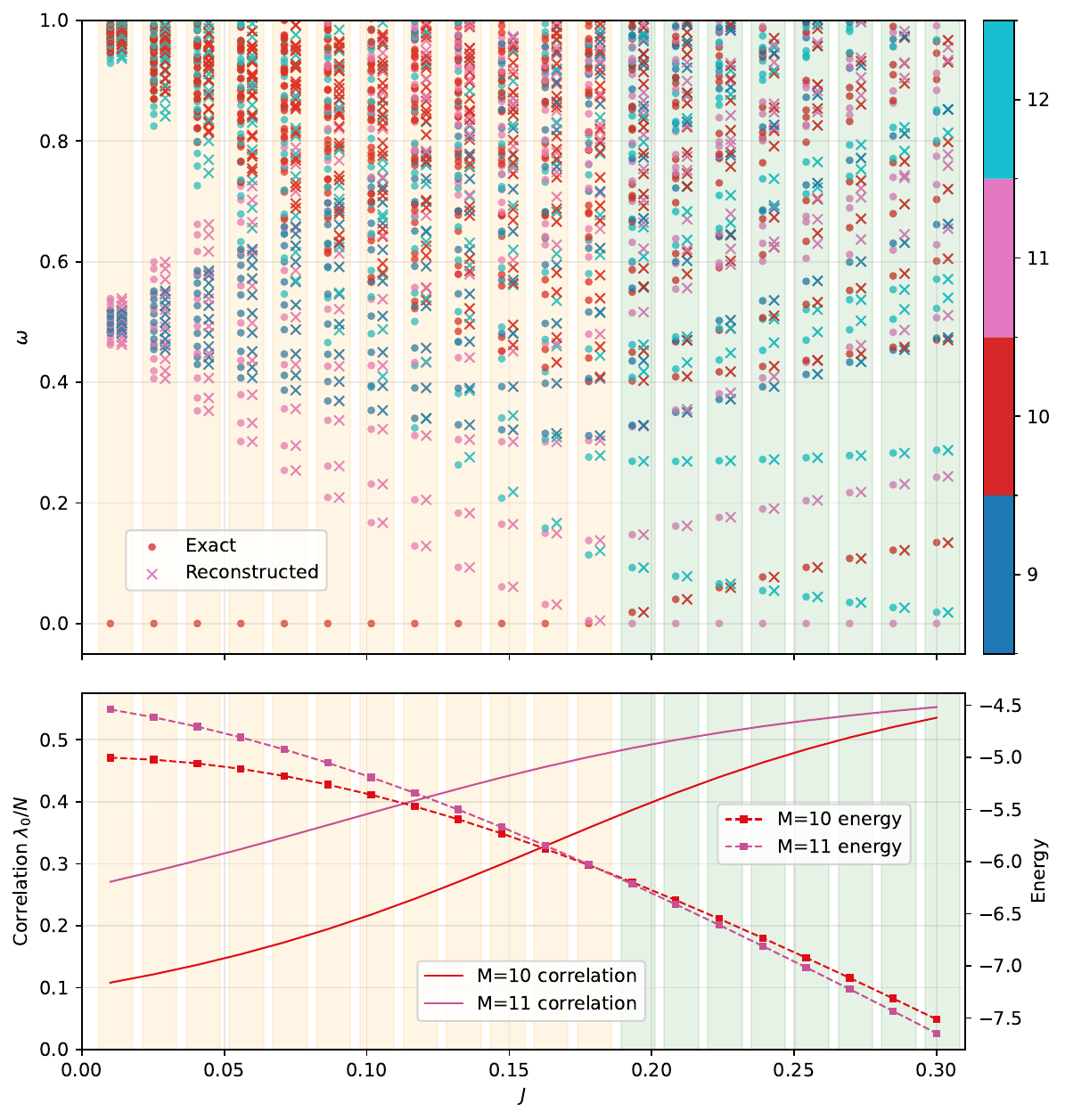}
    \includegraphics[width=0.49\textwidth]
    {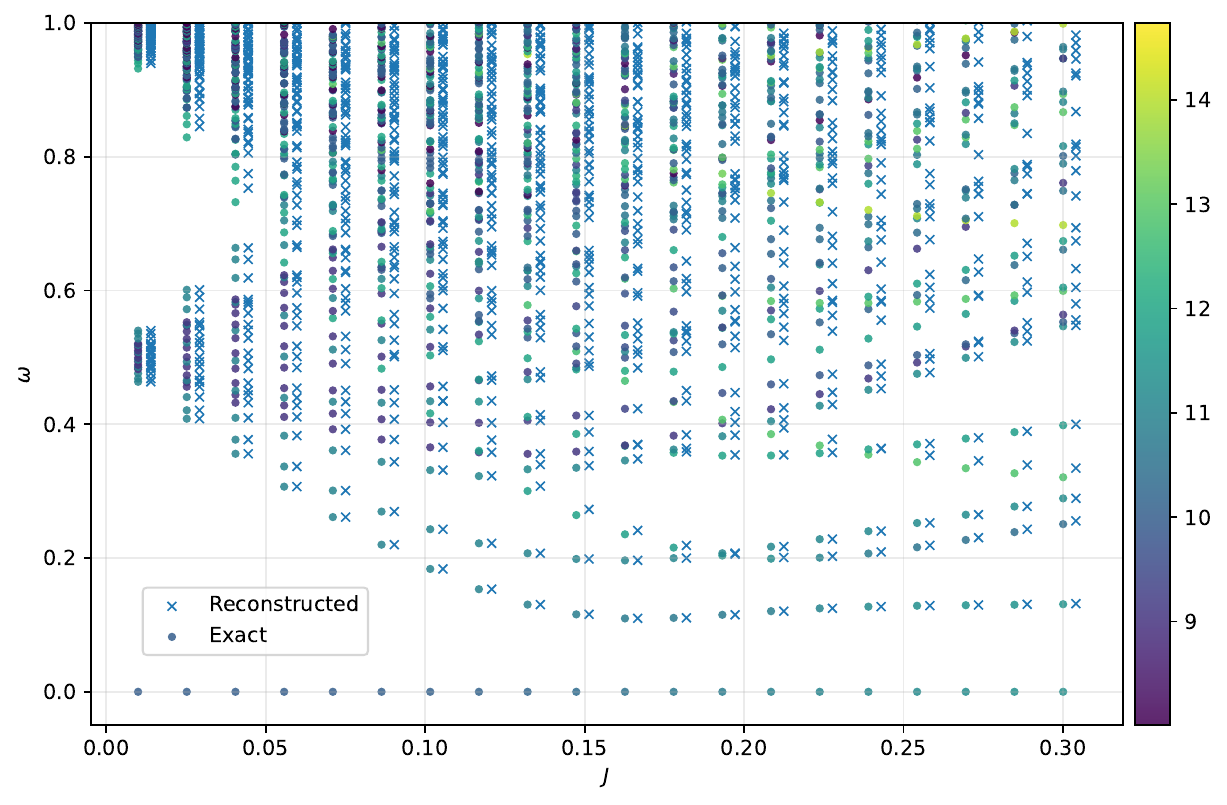}
    \vspace{-2em}
    \caption{
        Exact excitation energies (coloured dots) and tangent-space
        reconstructions (coloured crosses) for an $N=10$ Bose--Hubbard
        chain with bond dimension $D=8$, local dimension $d=3$, and chemical potential $\mu/U=0.5$.
        Results are shown for the particle-number conserving Hamiltonian
        with $\U(1)$-breaking term $\eta=0$ (top) and for the symmetry-breaking Hamiltonian with
        $\eta=0.01$ (bottom). For $\eta=0$, colours identify the particle-number
        sectors $M$. For $\eta\neq0$, particle number is not conserved
        and the continuous colour scale indicates the expectation value
        $\langle\hat{M}\rangle$ of each state. The ground-state energies
        and off-diagonal coherence for $\eta = 0$ associated with the $M=10$ and $M=11$
        sectors are also shown (middle). The shaded regions indicate different rank
        profiles $r_{\ell}(m)$; see
        Section~\ref{sec:schmidt_rank_distribution}.
    }
    \label{fig:N_10_D_8_d_3_mu_0.5}
\end{figure}
In the region where $M_0=10$ (orange region), the lowest particle-type excitation lies in the $M=11$ sector and softens as $J$ is increased, that is, its excitation energy vanishes near $J_c\simeq0.18$, where the ground state changes from the $M_0=10$ sector to the $M_0=11$ sector. For the finite chain considered here, this closing of the particle-addition gap is a finite-size precursor of the Mott-insulator to superfluid boundary rather than a thermodynamic phase transition. This interpretation is supported by the simultaneous increase in the off-diagonal coherence of the ground state as the crossing is approached, as shown in Figure~\ref{fig:N_10_D_8_d_3_mu_0.5} (middle panel), which is measured via the (normalized) largest eigenvalue $\lambda_0$ of the one-body density operator of the ground-state wave function. After the ground-state crossing, $M_0$ changes from $10$ to $11$ (green region). The lowest hole- and particle-type excitations now lie in the $M=10$ and $M=12$ sectors, respectively. This change also explains the improved reconstruction accuracy observed in the $M=12$ sector after the crossing. When $M_0=10$, the $M=12$ sector contains two-particle-addition excitations relative to the ground state. When $M_0=11$, by contrast, the lowest excitation in the $M=12$ sector is a single-particle-addition excitation and is therefore more accurately reconstructed by the linear tangent-space method for a given bond dimension.

We show the approximation error for the lowest excitation modes, that is, the particle- and hole-type excitations above the ground state, in Figure \ref{fig:approx_error}. The reconstructed spectrum around the level crossing agrees with the exact result to within $|\omega -\omega_{\mathrm{exact}}| \simeq 10^{-5}$. Thus, despite the relatively small bond dimension ($D=8$), the tangent-space method accurately reproduces the low-lying excitation energies across the
ground-state particle-number sector crossing. Figure \ref{fig:approx_error} also clearly shows that deep in the Mott-insulator ($J\ll 1$) all the bond dimensions produce high precision spectra. However, as we increase $J$, we see how precision decreases. Furthermore, a discrete jump in precision appears at the sector crossing. This corresponds exactly to the change in the rank profile of the tangent space, again indicated by the orange and green regions (see Section \ref{sec:schmidt_rank_distribution}). Also note the relatively large difference in precision between bond dimensions $D=7,\,8$ and $9$ in the orange region, which highlights the effect of the bond dimension.

\begin{figure}[h]
    \centering
    \includegraphics[width=0.48\textwidth]
    {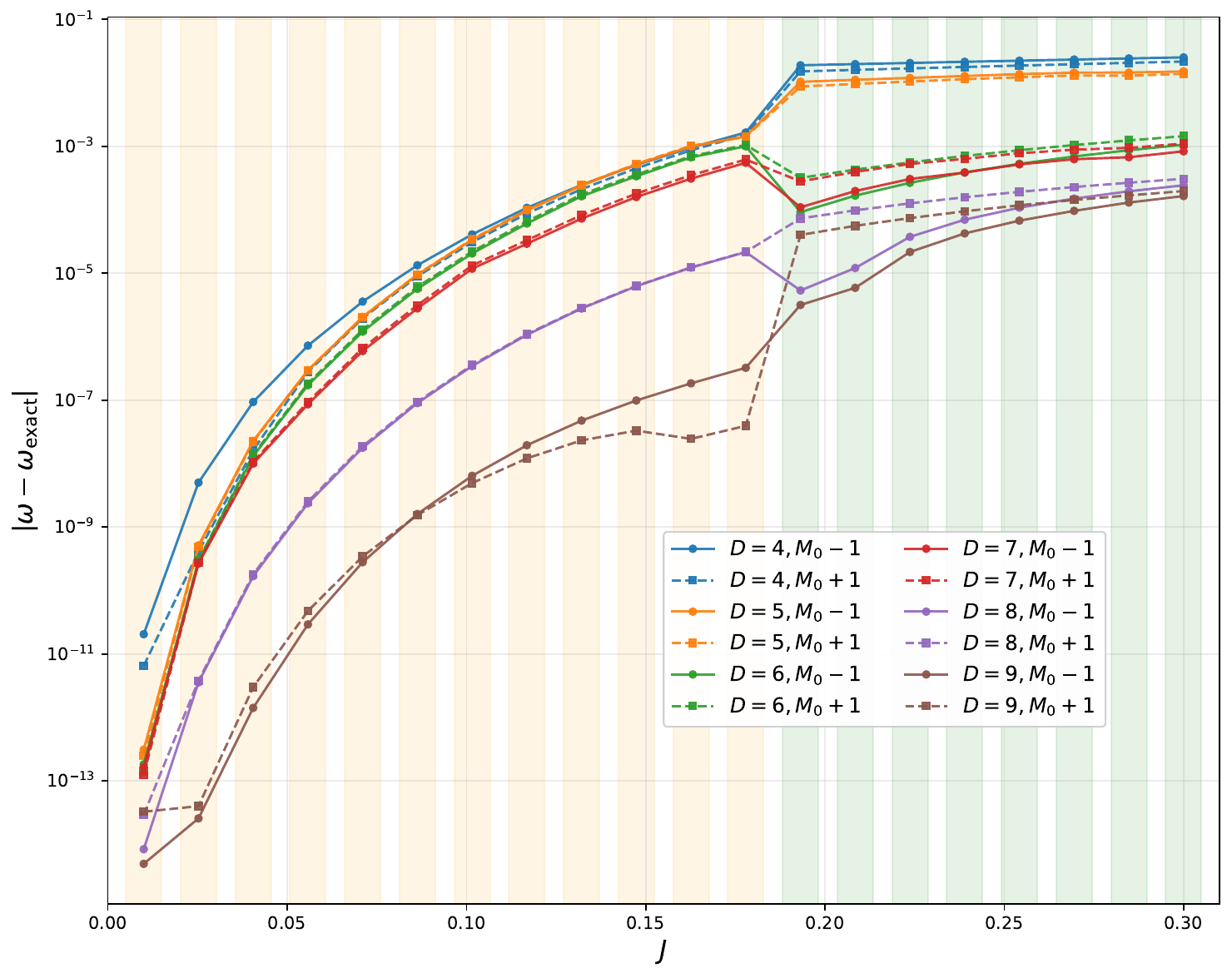}
    \vspace{-2em}
    \caption{Approximation error $|\omega -\omega_{\mathrm{exact}}|$ for the lowest excitation modes of the BH model in number sectors $M=M_0\pm 1$ for $N=10$, $\mu = 0.5$, $d=3$ and different bond dimensions $D$. The shaded regions indicate different rank
        profiles $r_{\ell}(m)$; see Section~\ref{sec:schmidt_rank_distribution}.}
    \label{fig:approx_error}
\end{figure}

The method also reconstructs the excitation spectrum when the $\U(1)$-symmetry is explicitly broken. This is illustrated by the results for $\eta=0.01$ in Figure~\ref{fig:N_10_D_8_d_3_mu_0.5} (bottom). Since $[H,\hat{M}]\neq0$ in this case, the eigenstates cannot be assigned to sectors of fixed particle number. We therefore replace the discrete sector colouring by a continuous colour scale representing the expectation value $\langle\hat{M}\rangle$ of each excitation mode. Correspondingly, the ground-state expectation value $\langle\hat{M}\rangle_0$ changes continuously with $J$, rather than undergoing a sharp change between integer-valued sectors.
We see that the reconstructed low-energy branches remain in close agreement with the exact spectrum, which highlights the general applicability of the tangent method.
\\~\\
In Appendix \ref{sec:exact_spectrum}, we also include, as a consistency check, an example with $N=3$, $d=3$, $\mu=0.5$, and $D=3$. This example shows that, for sufficiently large bond dimension, the full spectrum is recovered without approximation error.

\section{Tangent-space tomography}\label{sec:tangent_space_tomography}
In this section, we introduce methods and establish measures to investigate the structure of the tangent space. This will give us a detailed analysis, a \emph{tomography of the tangent space}. 
\subsection{Parametric deficiency}
The image of \eqref{eq:phys_tangent_matrix},
\begin{equation*}
    T := \operatorname{Im}(U)
    =
    T_{[\bar\psi]}V^{=}_{\boldsymbol D,\boldsymbol d}
    \subset X,
\end{equation*}
denotes the physical tangent space of the MPS variety at the reference state $\bar\psi$. If the Hamiltonian preserves particle number, we can decompose the Hilbert space as
\begin{equation*}
    \mathcal H = \bigoplus_M \mathcal H_M,
\end{equation*}
where \(\mathcal H_M\) is the \(M\)-particle sector, and we write
\begin{equation*}
    P_M:\mathcal H\to\mathcal H_M
\end{equation*}
for the orthogonal projector onto $\mathcal H_M$.

To make this decomposition completely explicit, we order the occupation-number basis by total particle number. Then every vector $\psi\in\mathcal H$ can be written as a block vector
\begin{equation*}
    \psi = (\psi^{(0)},\psi^{(1)},\dots,\psi^{(M_{\max})}),
\end{equation*}
with $M_{\max} := \sum_{i=1}^N (d_i-1)$, $\psi^{(M)}\in \mathcal H_M$ and
\begin{equation*}
    \dim \mathcal H_M = \sum_{q=0}^{\lfloor M/d\rfloor} (-1)^q\binom{N}{q}\binom{M -qd + N -1}{N-1}.
\end{equation*}
In this ordering, $P_M$ simply picks out the $M$-th block. The same applies to any tangent vector
\begin{equation*}
    t = \sum_M P_M t
    =
    (t^{(0)},t^{(1)},\dots,t^{(M_{\max})}) \in T,
\end{equation*}
and in general several blocks $t^{(M)}$ may be non-zero. Thus a tangent vector need not itself belong to a single particle-number sector.

This decomposition into number sectors allows us to describe the information of the tangent space sectorwise. We will see that, depending on the ground state, the precision of the reconstructed spectrum depends on the number sector of the excitation. 
\\~\\
Let us define the \emph{projected sector-\(M\) tangent space}
\begin{equation*}
    T_M^{\mathrm{proj}}
    :=
    P_M T
    =
    \{P_M t : t\in T\}
    \subseteq
    \mathcal H_M.
\end{equation*}
By contrast, the \emph{intrinsic sector-$M$ tangent space} is defined as $T_M^{\mathrm{intr}}:=T\cap\mathcal H_M$.
Thus $T_M^{\mathrm{intr}}$ consists of those tangent vectors that already lie entirely in $\mathcal H_M$. The inclusion $T_M^{\mathrm{intr}} \subseteq T_M^{\mathrm{proj}}$
is immediate: if $t\in T\cap\mathcal H_M$, then $P_M t=t$, hence $t\in P_M T$. The reverse inclusion does not hold in general. For example, suppose $\mathcal H=\mathcal H_0\oplus\mathcal H_1$ with non-zero vectors $e_0\in\mathcal H_0$ and $e_1\in\mathcal H_1$, and let $T=\operatorname{span}\{e_0+e_1\}$.
Then
\begin{equation*}
    T_0^{\mathrm{proj}}=P_0T=\operatorname{span}\{e_0\},
\qquad
T_0^{\mathrm{intr}}=T\cap\mathcal H_0=\{0\}.
\end{equation*}
Thus $P_0T\subseteq\mathcal H_0$, but $P_0T\not\subseteq T$. In other words,
$e_0$ occurs as the sector-$0$ component of the tangent vector $e_0+e_1$, but there is no tangent vector in $T$ whose only non-zero component is $e_0$. This highlights that projection to a number sector can produce vectors that are not genuine tangent vectors. 

We say that the tangent space is \emph{adapted} to the particle-number decomposition if $P_M T \subseteq T$ for all $M$.
From now, we only consider tangent spaces that are adapted. This means that each particle-number component of a tangent vector is again a tangent vector. Equivalently,
\begin{equation*}
    T = \bigoplus_M \bigl(T\cap\mathcal H_M\bigr) \quad\Leftrightarrow\quad P_MT = T \cap\mathcal H_M,
\end{equation*}
that is, the intrinsic and projected sector-$M$ tangent spaces coincide. Note that this condition will be violated if, e.g., a BH Hamiltonian with small $\eta \ll 1$ is used to calculate the ground state while the same Hamiltonian with $\eta = 0$ is used for the TDVP.
The dimension of $T_M^{\mathrm{proj}}$ ($T_M^{\mathrm{intr}}$) is given as the number of independent directions in sector $M$, that is,
\begin{equation*}
    \dim(T_M^{\mathrm{proj}}) = \rank(P_MU) =: \rho_M.
\end{equation*}
Let \(d_M^{\mathrm{target}}\) denote the target number of sector-\(M\) directions one wishes to represent, that is $d_M^{\mathrm{target}}=\dim\mathcal H_M$ (or $d_M^{\mathrm{target}}=\dim\mathcal H_M-1$ if the ground-state direction in the sector of $\bar\psi$ is removed by the orthogonality condition $\langle \bar\psi,\delta\psi\rangle=0$). We define the \emph{parametric deficiency} of the tangent space $T$ as
\begin{equation*}
    \delta_M^{\mathrm{par}}
    :=
    d_M^{\mathrm{target}}-\dim(T_M^{\mathrm{proj}}).
\end{equation*}
This measures the deficiency of the tangent space in sector $M$, that is, how many independent sector-$M$ components are missing. The parametric deficiency defines a vector from which the number of missing components for each number sector can be read off (see Section \ref{sec:examples_tomography}).
\\~\\
Another source of deficiency we want to mention is \emph{rank frustration}. Consider $N=5$ and $d=3$, that is, the maximal local occupation is $2$ and hence the maximal total particle number is $10$. This implies a combinatorial restriction on maximal ranks across cuts. In particular, the maximal bond patterns $\boldsymbol{D}$ for different numbers of particles in the ground state $M_0$ are
\begin{align}\label{eq:rank_frustration}
M_0=4,5,6 \quad&\Rightarrow\quad \boldsymbol{D} = [3,9,9,3],\nonumber\\
M_0=3,7 \quad&\Rightarrow\quad \boldsymbol{D} = [3,7,7,3],\\
M_0=2,8 \quad&\Rightarrow\quad \boldsymbol{D} = [3,5,5,3].\nonumber
\end{align}
A DMRG ground-state routine in such number sectors is frustrated and limited to the above maximal bond patterns. With the maximal admissible bond pattern, the full ground-state sector \(\mathcal H_{M_0}\) can in principle be represented. Since the number of available directions in the sector-resolved tangent space $P_MU$ depends on the ground-state rank pattern, the reconstructed excitation spectrum inherits any rank frustration of the ground state: for some excitation with $M \neq M_0$, we generally have $d_{M}^{\mathrm{target}} > \rank(P_MU)$. These modes have a parametric deficiency and will come with an approximation error. This is a clear limitation of the \emph{linear} MPS tangent-space ansatz introduced in Section \ref{sec:linear_tangent_mps}. We believe that a possible solution for that shortcoming is to extend the first-order perturbations of the ground-state tensors to second-order perturbations. This results in a MPS double tangent-space ansatz.

\subsection{Particle-resolved Schmidt rank distribution}\label{sec:schmidt_rank_distribution}
In the previous section we introduced the parametric deficiency as a measure to gauge the expressibility of the tangent-space ansatz. However, we would like to understand at a more fundamental level how and why the rank of the tangent space is distributed among the number sectors. 
\\
This section is motivated by the observation that the reconstructed spectrum for a given bond pattern exhibits discontinuities. This also happens if the bond pattern of the ground state is constant (see orange and green regions in Figure \ref{fig:N_10_D_8_d_3_mu_0.5}). To explain why the tangent spectrum can change even when the bond pattern stays fixed, it is necessary to refine the ordinary bond dimensions by a \emph{particle-resolved Schmidt rank} (PRSR).
\\~\\
Fix a conserved reference state $\bar\psi \in \mathcal H_{M_0}$, and fix a cut $\ell\in\{1,\dots,N-1\}$.
Then
\begin{equation*}
\mathcal H
=
\mathcal H^{[1:\ell]}\otimes \mathcal H^{[\ell+1:N]},
\end{equation*}
and each factor decomposes into left and right particle-number sectors,
\begin{equation*}
\mathcal H^{[1:\ell]}
=
\bigoplus_m \mathcal H^{[1:\ell]}_m,
\qquad
\mathcal H^{[\ell+1:N]}
=
\bigoplus_n \mathcal H^{[\ell+1:N]}_n.
\end{equation*}
Since $\bar\psi$ has total particle number $M_0$, only pairs with $m+n=M_0$ occur. The ground state expressed in the bipartite Hilbert space at cut $\ell$ reads
\begin{equation*}
\bar\psi_{\ell}
=
\sum_{m=m_{\min}^{(\ell)}}^{m_{\max}^{(\ell)}}
\bar\psi_{\ell}^{(m)},
\qquad
\bar\psi_{\ell}^{(m)}
\in
\mathcal H^{[1:\ell]}_m \otimes \mathcal H^{[\ell+1:N]}_{M_0-m},
\end{equation*}
where
\begin{align*}
&m_{\min}^{(\ell)}=\max\bigl(0,M_0-(N-\ell)(d-1)\bigr),\\
&m_{\max}^{(\ell)}=\min\bigl(M_0,\ell(d-1)\bigr).
\end{align*}
are the minimal and maximal occupations in the left block for a cut at site $\ell$ in the homogeneous case $d_i=d$.
\\~\\
To describe what $\bar\psi_{\ell}^{(m)}$ is, choose orthonormal bases
\begin{equation*}
\bigl\{u_{\alpha}^{(\ell,m)}\bigr\}_{\alpha=1}^{d^{L}_{\ell,m}}
\subset \mathcal H^{[1:\ell]}_m,
\quad
\bigl\{v_{\beta}^{(\ell,M_0-m)}\bigr\}_{\beta=1}^{d^{R}_{\ell,M_0-m}}
\subset \mathcal H^{[\ell+1:N]}_{M_0-m},
\end{equation*}
where
\[
d^{L}_{\ell,m}:=\dim\mathcal H^{[1:\ell]}_m,
\qquad
d^{R}_{\ell,M_0-m}:=\dim\mathcal H^{[\ell+1:N]}_{M_0-m}.
\]
Then each block $\bar\psi_{\ell}^{(m)}$ can be written as
\begin{equation*}
\bar\psi_{\ell}^{(m)}
=
\sum_{\alpha=1}^{d^{L}_{\ell,m}}
\sum_{\beta=1}^{d^{R}_{\ell,M_0-m}}
(\Psi_{\ell}^{(m)})_{\alpha,\beta}\,
u_{\alpha}^{(\ell,m)}\otimes v_{\beta}^{(\ell,M_0-m)}.
\end{equation*}
The matrix $\Psi_{\ell}^{(m)}$ is therefore simply the coefficient matrix of the block $\bar\psi_{\ell}^{(m)}$ with respect to the chosen sector bases (we give an example in Appendix \ref{app:coeff_matrix}). Its rank
\begin{equation*}
r_\ell(m)
:=
\rank\!\bigl(\Psi_\ell^{(m)}\bigr)
\end{equation*}
is the number of non-zero Schmidt values in the block with $m$ particles to the left of cut $\ell$. So $r_\ell(m)$ is the PRSR for a cut at $\ell$ with $m$ particles in the left block.
\\
Therefore, for a fixed cut $\ell$, the matrix flattening of the ground state is:
\begin{align}\label{eq:mat_form_GS}
    \mathrm{Mat}_{\ell}(\bar\psi) &= \bigoplus_{m=m_{\min}^{(\ell)}}^{m_{\max}^{(\ell)}} \Psi_\ell^{(m)}.
\end{align}
This form also allows us to make the connection between the bond dimension at cut $\ell$ and the PRSR distribution,
\begin{equation*}
    D_{\ell}=\rank(\bar\psi_{\ell}) = \sum_{m=m_{\min}^{(\ell)}}^{m_{\max}^{(\ell)}}r_\ell(m),
\end{equation*}
which follows simply from the block decomposition in \eqref{eq:mat_form_GS}.
Thus, the collection $\bigl(r_\ell(m)\bigr)_{\ell,m}$ contains strictly more information than the coarse bond pattern $(D_1,\dots,D_{N-1})$. Two states may have the same bond pattern while having different PRSR profiles.
\\~\\
Naturally, this profile is also reflected in the matrices $M^j_n\in\mathbb C^{D_{j-1}\times D_j}$, as these represent the ground state. 
Using the particle-resolved Schmidt rank distribution, the matrix $M^j_n$ becomes a block matrix w.r.t. $r_j$ and $r_{j-1}$ by simply resolving $D_j$ and $D_{j-1}$ as PRSR:
\begin{align}\label{eq:structure_MPS_matrices}
&M^j_n=\bigl(M^j_n(m,m')\bigr)_{m,m'},\\
&M^j_n(m,m'):
\mathbb C^{r_j(m')}
\longrightarrow
\mathbb C^{r_{j-1}(m)}
.
\end{align}
Because the virtual index $m'$ records the cumulative particle number to the left of and including the bond $j$, inserting local occupation $n$ at site $j$ changes this number from $m$ to $m+n$. Therefore a block can be non-zero only if $m'=m+n$.
Equivalently,
\begin{align*}
M^j_n
&=
\bigoplus_{m=m_{\min}^{(j-1)}}^{m_{\max}^{(j-1)}} M^j_n(m),\\
M^j_n(m)&:
\mathbb C^{r_j(m+n)}
\longrightarrow
\mathbb C^{r_{j-1}(m)},
\end{align*}
where $\bigl(M^j_n\bigr)_{(m,\alpha),(m',\beta)}=0$ unless $m' = m +n$. Here $\alpha, \beta$ are the indices within a block $(m, m')$.
\\
We can see this condition as a particle-number selection rule. It says that the local tensor can connect only those sectors whose left particle numbers differ exactly by the physical occupation $n$ inserted at site $j$. One may view $m$ as the incoming particle-flow channel and $m+n$ as the outgoing channel. We give an example of the block form in Appendix \ref{app:block_form_matrices}.
\\~\\
The same PRSR profiles also organize the first-order
variations \(\delta M^j_n\). Unlike the background tensor \(M^j_n\), a general
tangent variation need not satisfy the ground-state selection rule
\(m'=m+n\). Rather, it decomposes into all blocks
\[
\delta M^j_n(m,m'):
\mathbb C^{r_j(m')}
\longrightarrow
\mathbb C^{r_{j-1}(m)}
\]
for which both source and target spaces are non-zero, that is, \(r_{j-1}(m)\neq0\) and \(r_j(m')\neq0\). The blocks with \(m'=m+n\) preserve the total particle number \(M_0\). Blocks
with \(m'\neq m+n\) generate tangent components in other number sectors, allowing particle- and hole-type excitations and more.

\subsection{Illustrative example}\label{sec:examples_tomography}
For the following example we again use the Bose-Hubbard model in \eqref{eq:bose_hubbard} with $\eta = 0$, where we fix $N=4$, $d=3$, $U = 1$, $\mu=0.5$ and sweep over a range $J \in [0.01, 0.3]$. The GS along the parameter sweep for $J$ is not confined to a single particle-number sector. Rather, one finds two competing number sectors, the sector $M_0=4$ and the sector $M_0=5$. 
Since the horizontal tangent basis is built from the ground-state tensors, the ranks $\rho_M$ depend on the particle-resolved Schmidt rank profile of the GS and not only on the bond dimension. Therefore, even if the total bond dimensions $D_\ell$ remain unchanged, the projected ranks $\rho_M$ may still change when the GS moves to a different particle-resolved configuration. Algebraically, the state remains in the same coarse rank stratum of the TT variety, but moves between different particle-resolved rank profiles inside it.

We illustrate this with $D=5$ (Figure \ref{fig:N_4_d_3_mu_0.5}). In this case, the bond pattern of the ground state is $[3, 5, 3]$. We can identify four regions with different rank profiles. In the following, we list both the PRSR profiles across two central cuts and the tangent rank profile for the different regions. Note that for better visibility, we only show the lowest excitations in number sectors $3$ (top), $4$ (middle) and $5$ (bottom).
\\
In the orange region we have PRSR $r_2(m)=(0, 2, 1, 2, 0, 0)$ and $r_3(m)=(0, 0, 1, 1, 1, 0)$, and a tangent rank profile
\begin{equation*}
    (\rho_M)_{M=0}^{M_{max}=8}= (0, 4, 6, 16, 12, 16, 6, 4, 0).
\end{equation*}
In the green region we have $r_2(m)=(0, 1, 2, 2, 0, 0)$ and $r_3(m)=(0, 0, 1, 1, 1, 0)$, and 
\begin{equation*}
    (\rho_M)_{M=0}^{M_{max}=8}= (0, 3, 8, 14, 14, 13, 8, 3, 0).
\end{equation*}
In the red region we have $r_2(m)=(0, 1, 2, 1, 1, 0)$ and $r_3(m)=(0, 0, 1, 1, 1, 0)$, and 
\begin{equation*}
    (\rho_M)_{M=0}^{M_{max}=8}= (1, 3, 8, 13, 14, 13, 8, 3, 1).
\end{equation*}
And, finally, in the blue region we have $r_2(m)=(0, 1, 2, 1, 1, 0)$ and $r_3(m)=(0, 0, 0, 1, 1, 1)$, and
\begin{equation*}
    (\rho_M)_{M=0}^{M_{max}=8}= (0, 2, 6, 12, 16, 13, 10, 4, 1).
\end{equation*}
It is important to stress that the bond dimension pattern is invariant in all regions. Only the internal particle-resolved multiplicities are rearranged, and the tangent ranks change accordingly. 
\\
We can clearly see how the support of number sectors in the tangent space (in the form of $\rho_M$) changes precision of the reconstructed spectrum. 
\begin{figure}[h!]
    \centering
    \includegraphics[width=0.48\textwidth]{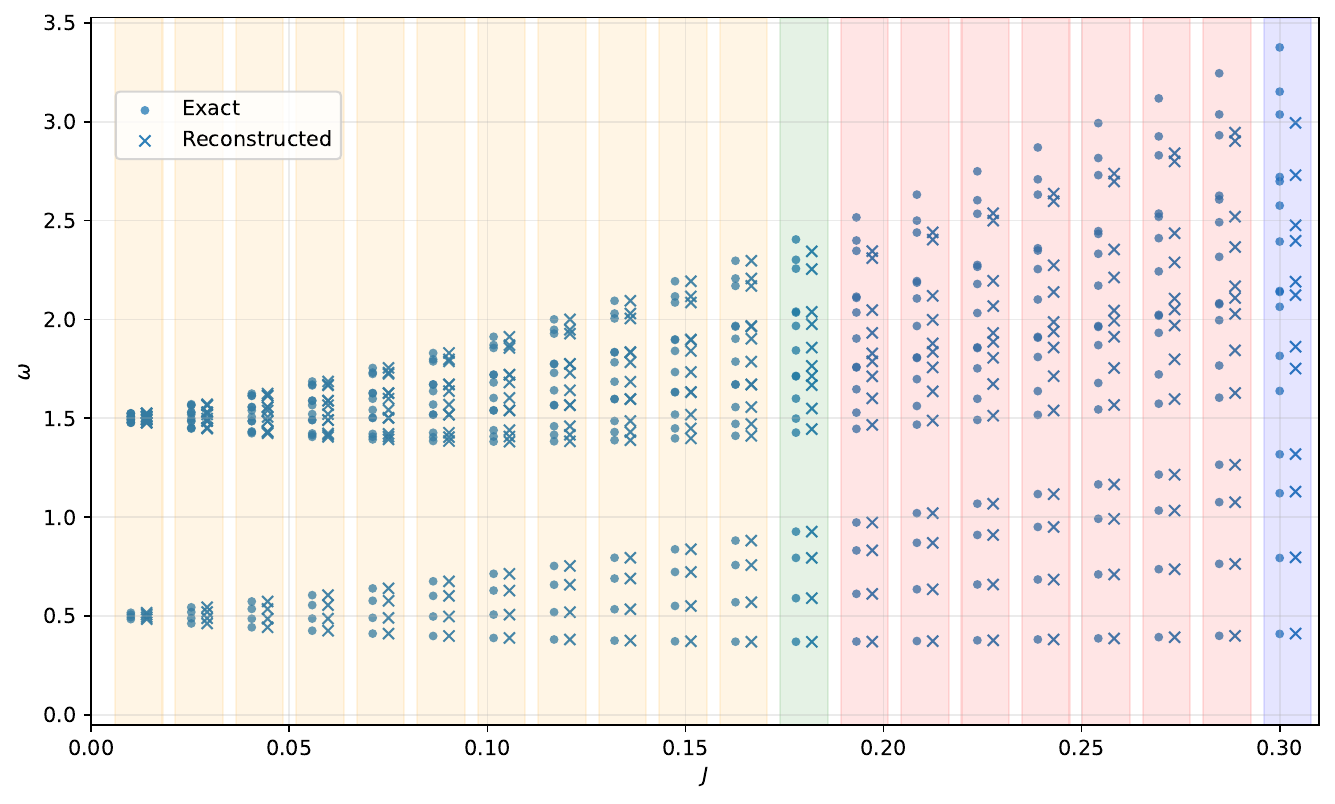}
    \includegraphics[width=0.48\textwidth]{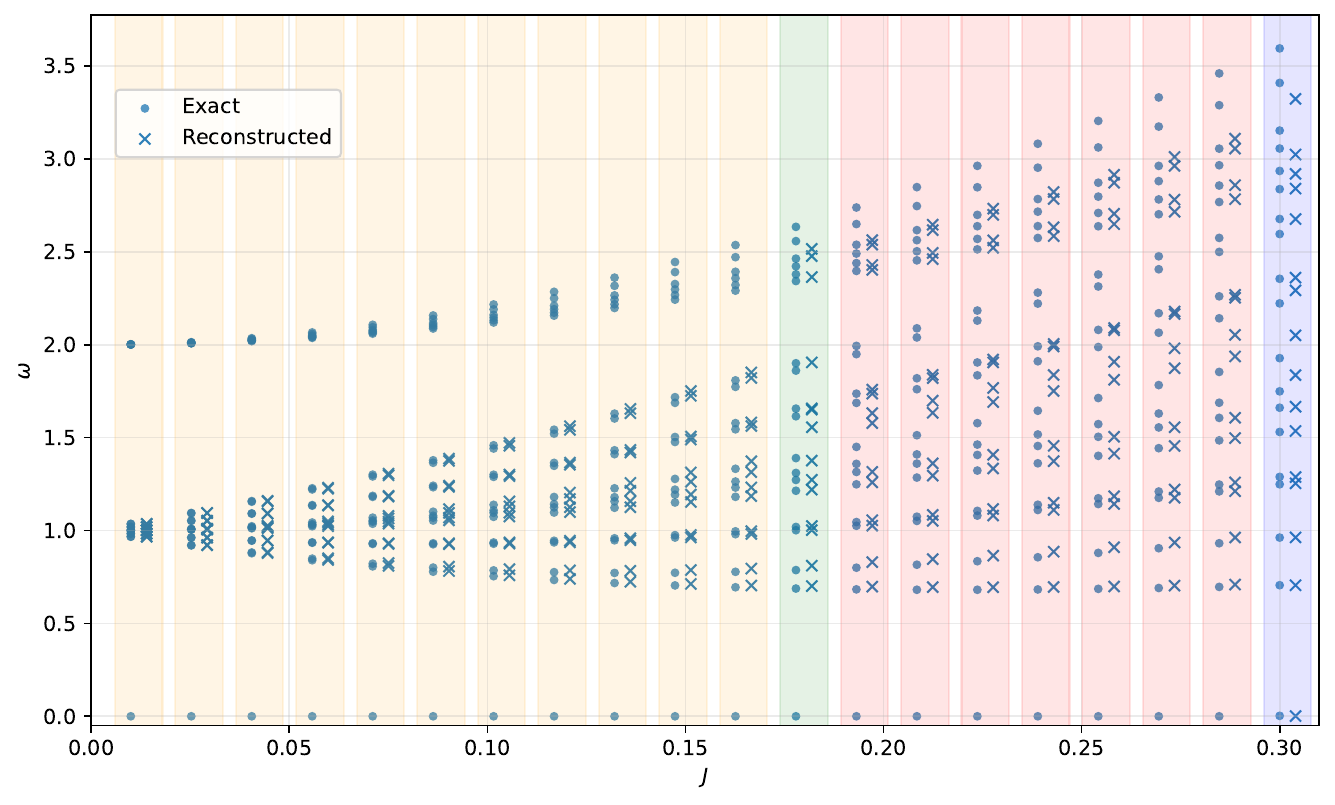}
    \includegraphics[width=0.48\textwidth]{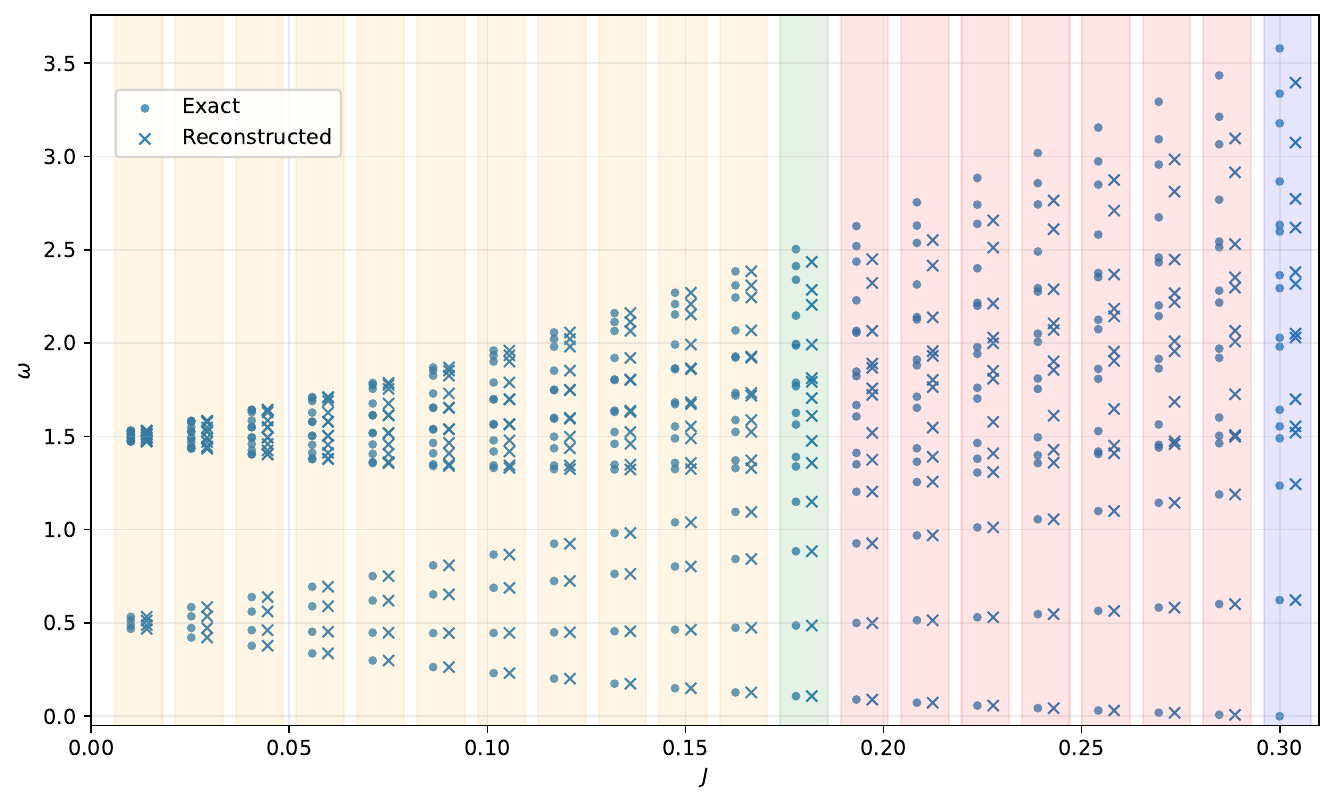}
    \caption{Exact spectrum (blue dots) and the reconstructed spectrum (blue crosses) for $N = 4,\,D = 5,\,\mu=0.5$. Excitations in number sectors $3,\,4$ and $5$ are shown in top, middle and bottom, respectively. The shaded regions indicate different rank profiles of $r_l(m)$.}
    \label{fig:N_4_d_3_mu_0.5}
\end{figure}
In the following we use the PRSR and tangent rank information to understand the spectrum in the orange region. In Figure \ref{fig:N_4_d_3_mu_0.5} (middle), we see the lowest excitations with $M=4$. The lower branch (starting at $\omega = 1.0$) is captured well by the tangent-space ansatz, while the higher branch (starting at $\omega = 2.0$) is not. At the same time, we see that for $M=3,\,5$ both the low and high energy excitations are captured well. This can be explained entirely via the tangent rank: in the orange region, the sector $M=4$ has tangent rank $\rho_4 = 12$, while the required number of directions for exact representation is $d^{\mathrm{target}}_4 = 19$. The sectors $M = 3,\,5$ have $\rho_3 = 16,\,\rho_5 = 16$ with required number of directions $d^{\mathrm{target}}_3 = 16$,$d^{\mathrm{target}}_5 = 16$, respectively, that is, these sectors are represented exactly.

\section{Conclusions}\label{sec:conclusion}
We have presented a linear matrix product state (MPS) tangent-space method for approximating
(low-energy) excitation spectra of quantum many-body systems. The construction removes the redundant directions of the tangent space associated to the gauge freedom and leads to an effective projected Hamiltonian eigenproblem. In this form, it is compatible
with established MPS tangent-space approaches while applying naturally to
finite, non-uniform systems with open boundary conditions.

Another focus of the work was an investigation of the structure and expressivity of the tangent-space
ansatz. Using the algebraic description of MPS by flattening ranks, we
introduced a sector-resolved rank tomography of the tangent space. This gives a
direct measure of parametric deficiency in each particle-number sector. We also
refined the usual bond dimensions by the particle-resolved Schmidt-rank
distribution (PRSR), which explains how states with the same coarse bond
dimensions can nevertheless have different tangent ranks and different
spectral accuracy.

The Bose--Hubbard examples illustrate both aspects of the method. The projected
effective Hamiltonian reproduces the spectrum obtained from a direct exact diagonalization of the many-body Hamiltonian accurately in sectors
where the tangent rank is sufficient, while the rank tomography explains the
observed sector-dependent approximation error. This demonstrates that the tangent-space method can provide accurate low-energy spectra of quantum many-body systems when the relevant sector ranks are sufficiently large.

Natural extensions of our work include applications to topological models such as the AKLT chain and generalizations
of the rank-tomographic viewpoint to other tensor network geometries, such as tree tensor networks.

\section*{Acknowledgements}
We are grateful to Simon Telen, Matteo Rizzi, Christian Johansen, Alberto Biella, Alessandra Bernardi and Alberto Tabarelli de Fatis for continuous discussions. This work has been supported by European Union’s HORIZON–MSCA-2023-DN-JD programme under the Horizon Europe (HORIZON) Marie Sklodowska-Curie Actions, grant agreement
101120296 (TENORS). 
The authors also acknowledge financial support from the Provincia Autonoma di Trento; the Q@TN Initiative; the National Quantum Science and Technology Institute through the PNRR MUR Project under Grant PE0000023-NQSTI, co-funded by the European Union -- NextGeneration EU. 

\section{Appendix}\label{sec:appendix}
\subsection{Graph definition tensor network states}\label{app:graph_def_TNS}
\begin{definition}
A \textit{directed graph} $\Gamma=(V,E,s,t)$ is defined by:
\begin{itemize}
    \item $V$ and $E$ are finite sets, called \textit{vertices} and \textit{edges},
    \item $s:E\to V\cup\{\infty\}$ assigns to every edge its source,
    \item $t:E\to V\cup\{\infty\}$ assigns to every edge its target,
\end{itemize}
such that there is no edge $e\in E$ with $s(e)=t(e)=\infty$.
\end{definition}

Edges connecting two vertices in $V$ are called \textit{bond edges} or \textit{virtual edges}. Edges incident to $\infty$ are called \textit{physical edges}. To every edge $e\in E$ we associate a finite-dimensional complex vector space $V_e$.

For a vertex $v\in V$, let
\begin{equation*}
    V_v
:=
\bigotimes_{e:\,s(e)=v} V_e
\otimes
\bigotimes_{e:\,t(e)=v} V_e
\end{equation*}
denote the local tensor space attached to $v$. Furthermore, the \textit{physical space} of the tensor network is
\begin{equation*}
    W_\Gamma
:=
\bigotimes_{e:\,\infty\in\{s(e),t(e)\}} V_e.
\end{equation*}
Tensor contraction defines a multilinear map
\begin{equation*}
    \Phi_{TN}:\prod_{v\in V} V_v \to W_\Gamma.
\end{equation*}
A vector $w\in W_\Gamma$ is called a \textit{tensor network state} for the pair $(\Gamma,\{V_e\}_{e\in E})$ if $w$ lies in the image of $\Phi_{TN}$.

\subsection{Block form of tensor train components}\label{app:block_form_matrices}

We illustrate how the PRSR distribution is visible directly in the local MPS GS tensors. We consider the conserved ground state for $N=5,\,d=3,\,\mu=0.01$ and $D=7$. Along the sweep over $J$-values in $[0.01, 0.3]$, the particle-number sector of the ground state changes from $M_0=3$ to $M_0=4$ between $J=0.1931$ and $J=0.2084$.
The coarse bond pattern is $\boldsymbol{D} = [3,7,7,3]$ and remains unchanged across this switch. Thus the ordinary bond dimensions alone do not detect the change.
\\
We inspect the tensor at site $j=3$ with local occupations $n\in\{0,1,2\}$. The local MPS GS matrices are $M^3_n \in \mathbb C^{D_2\times D_3}$.
\\
From \eqref{eq:structure_MPS_matrices} we know that the rows of $M^3_n$ are grouped according to the particle number $m$ for the sites $1,\,2$ to the left of the cut, while the columns are grouped according to the particle number $m'$ in sites $1,\,2,\,3$. Since site $3$ contributes exactly $n$ particles, a non-zero block must satisfy $m' = m+n$. This is the particle-number selection rule.

Before the switch at $J=0.1931$, the ground state lies in the sector $M_0=3$. The PRSR distributions around site $3$ are
\begin{equation*}
    r_2 = (1, 2, 3, 1, 0), \quad r_3 = (1, 3, 2, 1, 0),
\end{equation*}
for $m = 0, ..., 4$. Hence, the row indices of $M^3_n$ are grouped into blocks with sizes $(1,2,3,1)$ for $m=0,1,2,3$, while the column indices are grouped as $(1,3,2,1)$ for $m'=0,1,2,3$.
For $M^3_0$ we have
\begin{equation*}
    M^3_0 \sim
\begin{pmatrix}
\star_{1\times 1} & 0 & 0 & 0 \\
0 & \star_{2\times 3} & 0 & 0 \\
0 & 0 & \star_{3\times 2} & 0 \\
0 & 0 & 0 & \star_{1\times 1}
\end{pmatrix}.
\end{equation*}
Here $\star_{a\times b}$ denotes a matrix block of size $a\times b$. The allowed blocks satisfy $m'=m$, because $n=0$.
\\
For $M^3_1$, the allowed blocks satisfy $m'=m+1$, so
\begin{equation*}
    M^3_1 \sim
\begin{pmatrix}
0 & \star_{1\times 3} & 0 & 0 \\
0 & 0 & \star_{2\times 2} & 0 \\
0 & 0 & 0 & \star_{3\times 1} \\
0 & 0 & 0 & 0
\end{pmatrix},
\end{equation*}
and for $M^3_2$, the allowed blocks satisfy $m'=m+2$, so
\begin{equation*}
    M^3_2 \sim
\begin{pmatrix}
0 & 0 & \star_{1\times 2} & 0 \\
0 & 0 & 0 & \star_{2\times 1} \\
0 & 0 & 0 & 0 \\
0 & 0 & 0 & 0
\end{pmatrix}.
\end{equation*}

After the switch, at $J=0.2084$, the ground state lies in the sector $M_0=4$. The PRSR distributions have changed to
\begin{equation*}
    r_2 = (1, 2, 2, 2, 0), \quad r_3 = (0, 2, 2, 2, 1),
\end{equation*}
Thus, the matrix $M^3_n$ is still a $7\times 7$ matrix, but its internal block structure has changed. The rows are grouped by $m=0,1,2,3$ with sizes $(1,2,2,2)$, while the columns are grouped by $m'=1,2,3,4$ with sizes $(2,2,2,1)$. The block form of $M^3_0$ is now
\begin{equation*}
    M^3_0 \sim
\begin{pmatrix}
0 & 0 & 0 & 0 \\
\star_{2\times 2} & 0 & 0 & 0 \\
0 & \star_{2\times 2} & 0 & 0 \\
0 & 0 & \star_{2\times 2} & 0
\end{pmatrix}.
\end{equation*}
For $M^3_1$, one obtains
\begin{equation*}
    M^3_1 \sim
\begin{pmatrix}
\star_{1\times 2} & 0 & 0 & 0 \\
0 & \star_{2\times 2} & 0 & 0 \\
0 & 0 & \star_{2\times 2} & 0 \\
0 & 0 & 0 & \star_{2\times 1}
\end{pmatrix},
\end{equation*}
and for $M^3_2$
\begin{equation*}
    M^3_2 \sim
\begin{pmatrix}
0 & \star_{1\times 2} & 0 & 0 \\
0 & 0 & \star_{2\times 2} & 0 \\
0 & 0 & 0 & \star_{2\times 1} \\
0 & 0 & 0 & 0
\end{pmatrix}.
\end{equation*}

This example explicitly shows how the PRSR distribution controls the shape of the local matrices. The total matrix size remains $7\times 7$ before and after the switch, but the block sizes and the available \emph{particle-number channels} change. For instance, before the switch, the third cut supports particle numbers $m'=0,1,2,3$, whereas after the switch it supports $m'=1,2,3,4$. Thus, low-particle channels disappear and higher-particle channels appear. This is precisely the refinement that is invisible to the coarse bond pattern but visible in the particle-resolved Schmidt rank profile.

\subsection{Example for coefficient matrix}\label{app:coeff_matrix}
To illustrate the particle-resolved coefficient matrices
$\Psi_\ell^{(m)}$, consider a simple conserved reference state with
$N=4$, local dimension $d=3$, and total particle number $M_0=4$. We use
the local occupation basis $e_0,e_1,e_2$ and write
\begin{equation*}
    |n_1n_2n_3n_4\rangle
:=
e_{n_1}\otimes e_{n_2}\otimes e_{n_3}\otimes e_{n_4}.
\end{equation*}
Without loss of generality, we choose the cut $\ell=2$, so that $    \mathcal H
=
\mathcal H^{[1:2]}\otimes \mathcal H^{[3:4]}$.
For better readability, we write
\begin{align*}
    |n_1n_2\rangle_L
&:=
e_{n_1}\otimes e_{n_2}
\in \mathcal H^{[1:2]}, \\
|n_3n_4\rangle_R
&:=
e_{n_3}\otimes e_{n_4}
\in \mathcal H^{[3:4]}.
\end{align*}
The bases of the particle-number sectors of the left block are
\begin{align*}
\mathcal B_L(0)&=(|00\rangle_L),\\
\mathcal B_L(1)&=(|01\rangle_L,|10\rangle_L),\\
\mathcal B_L(2)&=(|02\rangle_L,|11\rangle_L,|20\rangle_L),\\
\mathcal B_L(3)&=(|12\rangle_L,|21\rangle_L),\\
\mathcal B_L(4)&=(|22\rangle_L),
\end{align*}
that means $m^{(2)}_{\mathrm{max}} = 4$ and $m^{(2)}_{\mathrm{min}} = 0$. We use the analogous ordering for the right block bases
$\mathcal B_R(n)$.
\\~\\
Now consider the ground state
\begin{align*}
\bar\psi
={}&
|00\rangle_L|22\rangle_R
+
|01\rangle_L|12\rangle_R
+
2|10\rangle_L|21\rangle_R
\nonumber\\
&+
|02\rangle_L|02\rangle_R
+
|11\rangle_L|11\rangle_R
+
|20\rangle_L|20\rangle_R
\nonumber\\
&+
|12\rangle_L|01\rangle_R
+
|12\rangle_L|10\rangle_R
+
|21\rangle_L|01\rangle_R
\\
&-
|21\rangle_L|10\rangle_R
\nonumber+
|22\rangle_L|00\rangle_R,
\end{align*}
with total particle number $M_0 = 4$, so $\bar\psi\in\mathcal H_{M_0}$. By grouping the terms of the bipartite basis for $\ell = 2$ we see that $\rank(\bar\psi) = D_2 = 9$. We now decompose $\bar\psi$ according to the number $m$ of particles to the
left of the cut:
\begin{equation*}
    \bar\psi
=
\sum_{m=0}^4
\bar\psi_2^{(m)},
\qquad
\bar\psi_2^{(m)}
\in
\mathcal H^{[1:2]}_m
\otimes
\mathcal H^{[3:4]}_{4-m}.
\end{equation*}
The matrix $\Psi_2^{(m)}$ is the coefficient matrix of
$\bar\psi_2^{(m)}$ with respect to the ordered bases $\mathcal B_L(m)$ and $\mathcal B_R(4-m)$.
For $m=0$, the left sector is spanned by $|00\rangle_L$, and the right
sector with $4$ particles is spanned by $|22\rangle_R$. Hence,
\begin{equation*}
    \Psi_2^{(0)}
=
\begin{bmatrix}
1
\end{bmatrix},
\qquad
r_2(0)=1.
\end{equation*}
For $m=1$, we use
\begin{equation*}
    \mathcal B_L(1)=(|01\rangle_L,|10\rangle_L),
\qquad
\mathcal B_R(3)=(|12\rangle_R,|21\rangle_R).
\end{equation*}
The corresponding block of the ground state is
\begin{equation*}
    \bar\psi_2^{(1)}
=
|01\rangle_L|12\rangle_R
+
2|10\rangle_L|21\rangle_R,
\end{equation*}
so
\begin{equation*}
    \Psi_2^{(1)}
=
\begin{bmatrix}
1 & 0\\
0 & 2
\end{bmatrix}
\end{equation*}
and $r_2(1)
=
\rank(\Psi_2^{(1)})
=
2$.
\\
For $m=2$, we use
\begin{align*}
    \mathcal B_L(2)
&=
(|02\rangle_L,|11\rangle_L,|20\rangle_L), \\
\mathcal B_R(2)
&=
(|02\rangle_R,|11\rangle_R,|20\rangle_R).
\end{align*}
The corresponding block of the ground state is
\begin{equation*}
    \bar\psi_2^{(2)}
=
|02\rangle_L|02\rangle_R
+
|11\rangle_L|11\rangle_R
+
|20\rangle_L|20\rangle_R,
\end{equation*}
so
\begin{equation*}
    \Psi_2^{(2)}
=
\begin{bmatrix}
1 & 0 & 0\\
0 & 1 & 0\\
0 & 0 & 1
\end{bmatrix}
\end{equation*}
and $r_2(2)
=
\rank(\Psi_2^{(2)})
=
3$.
Therefore, this block contributes three independent particle channels with
left particle number $m=2$.
\\
For $m=3$, we use
\begin{equation*}
    \mathcal B_L(3)
=
(|12\rangle_L,|21\rangle_L),
\qquad
\mathcal B_R(1)
=
(|01\rangle_R,|10\rangle_R).
\end{equation*}
The corresponding block is
\begin{align*}
    \bar\psi_2^{(3)}
&=
|12\rangle_L|01\rangle_R
+
|12\rangle_L|10\rangle_R \\
&\quad\quad+
|21\rangle_L|01\rangle_R
-
|21\rangle_L|10\rangle_R,
\end{align*}
so
\begin{equation*}
    \Psi_2^{(3)}
=
\begin{bmatrix}
1 & 1\\
1 & -1
\end{bmatrix}
\end{equation*}
and $r_2(3)
=
\rank(\Psi_2^{(3)})
=
2$.
Finally, for $m=4$, the left sector is spanned by $|22\rangle_L$, and the
right sector with $0$ particles is spanned by $|00\rangle_R$. Hence,
\begin{equation*}
    \Psi_2^{(4)}
=
\begin{bmatrix}
1
\end{bmatrix},
\qquad
r_2(4)=1.
\end{equation*}

Altogether, the PRSR distribution across the cut
$\ell=2$ is
\begin{equation*}
    \bigl(r_2(m)\bigr)_{m=0}^4
=
(1,2,3,2,1).
\end{equation*}
The ordinary Schmidt rank across the same cut can therefore be written as
\begin{equation*}
    D_2
=
\rank(\bar\psi^{(2)})
=
\sum_{m=0}^4 r_2(m)
=
1+2+3+2+1
=
9.
\end{equation*}
This example explicitly shows why the particle-resolved profile contains more
information than the ordinary bond dimension. The single number $D_2=9$
records only the total number of particle channels, whereas the distribution $r_2(m)$ records how these channels are distributed among the different left particle
numbers $m$.

\subsection{Computational costs}\label{sec:comp_costs}
We derive the computational costs of the spectrum reconstruction procedure outlined in Section \ref{sec:lin_tangent_method}.
Let 
\begin{equation*}
    q=\prod_{j=1}^N d_j~\text{and}~ P=\sum_{j=1}^N d_jD_{j-1}D_j
\end{equation*}
be the dimension of the full Hilbert space and the number of complex MPS parameters, respectively. The unconstrained parameter tangent space has real dimension \(2P\), since \(P\)
counts complex parameters. The horizontal tangent space is the real nullspace of
the constraint matrix \(R\), and therefore \(r=\dim_{\mathbb R}\ker(R)\leq 2P\).

A direct construction of the matrix \(F\in\mathbb C^{q\times r}\) would already require
\(\mathcal O(qr)\) memory, and forming \(H'\) as a dense matrix would require
\(\mathcal O(q^2)\) memory. Thus this direct approach has the same exponential
bottleneck as exact diagonalization. In practice, the matrices $ A_{ab}=\langle\Theta_a,\Theta_b\rangle$ and $B_{ab}=\langle\Theta_a,H'\Theta_b\rangle$
are computed without forming \(F\) or \(H'\) explicitly. Instead, the contractions are performed
directly in MPS/MPO form. If $D_{\mathrm{max}}$ is the maximal bond dimension and $d_{\mathrm{max}}$ the maximal physical dimension, and the
Hamiltonian is represented by an MPO of bond dimension \(w\), then the contraction of one pair
\((a,b)\) costs roughly $\mathcal O\!\left(N\,w\,d_{\max}^2D_{\max}^4\right)$ (assuming a sparse MPO format).
Since all \(r(r+1)/2\) pairs have to be evaluated, the dominant matrix-building cost is $\mathcal O\!\left(r^2\,N\,w\,d_{\max}^2D_{\max}^4\right)$.
For uniform \(d_j=d\) and \(D_j\leq D\), one has \(r=\mathcal O(NdD^2)\), giving the
rough scaling $\mathcal O\!\left(N^3 w d^4D^8\right)$ for the construction of \(A\) and \(B\). The implicit SVD compression of $F$ via diagonalization of $A$ and
the reduced eigenvalue problem \eqref{eq:red_eigval_problem} require dense linear algebra on \(r\times r\) matrices,
with cost \(\mathcal O(r^3)\) and memory \(\mathcal O(r^2)\).

Thus, once the DMRG ground state is given, the tangent-space spectrum computation
replaces the exponential dependence on $q$ of exact diagonalization by a polynomial dependence on
the MPS bond dimensions \(D_j\). The price is a steep dependence on the bond dimensions, so the
main bottleneck is the construction of the reduced matrices \(A\) and \(B\), not the
final diagonalization.

\subsection{Exact representation of spectrum}\label{sec:exact_spectrum}
As a consistency check, we give an example where $V_{\boldsymbol D,\boldsymbol d} = \P(\mathcal H)$, that is, where the variety parametrizes the whole embedding space of the MPS map and the parametric deficiency vanishes. 
\\
In Figure \ref{fig:variety_filling} we show the exact and reconstructed spectrum for $N = 3, \,d=3,\,D=3,\,\mu=0.5,\,U=1$. We use a small example to keep all modes clearly visible. Each eigenvalue from the exact diagonalization is approximated by a corresponding eigenvalue from the tangent method to machine precision $10^{-16}$. This shows that in the limit where the variety fills the whole embedding space we indeed see no approximation error in the spectrum reconstruction. 
\begin{figure}[h!]
    \centering
    \includegraphics[width=1.02\linewidth]{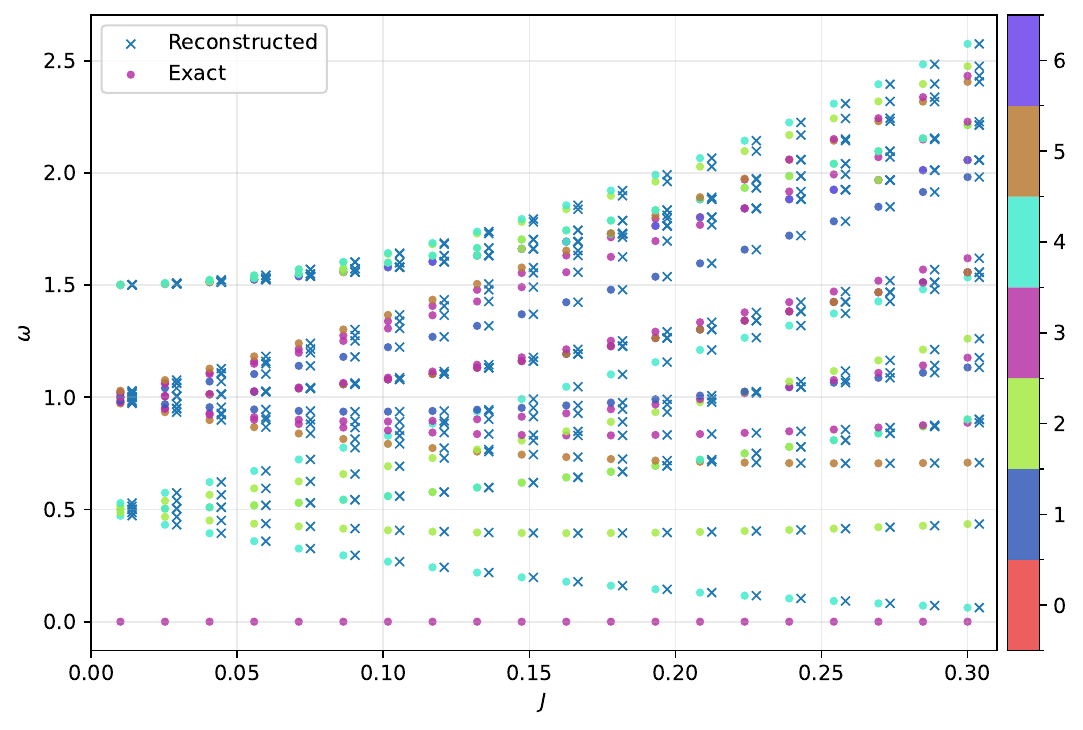}
    \caption{Exact spectrum (coloured dots) and the reconstructed spectrum (blue crosses) for $N = 3,\,D = 3,\,\mu=0.5$. The colour scale indicates the particle-number sectors of the excitations.}
    \label{fig:variety_filling}
\end{figure}

\subsection{Proofs}\label{app:proofs}

\subsubsection{Proof Prop. \ref{prop:smoothness_full_rank}}
\label{app:proofs_smoothness_full_rank}

\begin{proof}
By Proposition \ref{prop:rank_description_variety}, $V_{\boldsymbol D,\boldsymbol d}$ is cut out by homogeneous minors
of the flattenings and is a projective algebraic variety. For fixed $k$, inside $V_{\boldsymbol D,\boldsymbol d}$ the condition $\rank(T^{(k)})= D_k$ is equivalent to saying that at least one $D_k\times D_k$ minor of $T^{(k)}$ is non-zero. Equivalently, it is the complement of the closed condition $\rank(T^{(k)})\le D_k-1.$
Hence,
\begin{equation*}
    \{[T]\in V_{\boldsymbol D,\boldsymbol d}
\mid \rank(T^{(k)})=D_k\}
\end{equation*}
is Zariski open in $V_{\boldsymbol D,\boldsymbol d}$. Taking the finite
intersection over $k=1,\dots,N-1$, we get that
$V^{=}_{\boldsymbol D,\boldsymbol d}$ is Zariski open in
$V_{\boldsymbol D,\boldsymbol d}$.
\\
It remains to prove smoothness. For this, we first make a general observation. For a matrix
$M$ of rank $r$, choose an invertible $r\times r$ minor $M_{I,J}$, where $I,\,J$ are collections of indices for rows and columns, respectively, defining the minor. On the open set where this minor remains non-zero, $M$ has the unique
gauge-fixed factorization $M=AB$: define $A:=M_{*,J}M_{I,J}^{-1}$ and $B:=M_{I,*}$, where $M_{*,J}$ are the columns $J$ and $M_{I,*}$ are the rows $I$. The gauge freedom of this factorization is removed by requiring $A_{I,*}=I_r$. Thus, the rank-\(r\) matrix locus is isomorphic to 
\begin{equation*}
    \{(A,B): A_{I,*} = I_r, ~\det(B_{*, J})\neq 0\},
\end{equation*}
which is a Zariski open subset of an affine space. We can conclude that the variety of matrices $M$ of rank $r$ form a smooth variety. 
\\
Now, define the affine full-rank locus
\begin{equation*}
    \widetilde V^{=}_{\boldsymbol D,\boldsymbol d}
=
\left\{
T\in\mathcal H
\;\middle|\;
\rank(T^{(k)})=D_k,\; k=1,\dots,N-1
\right\}
\end{equation*}
and consider a tensor $T\in \widetilde V^{=}_{\boldsymbol D,\boldsymbol d}$.
\\
We now use the smoothness of a full-rank determinantal variety that was given above and apply this successively to the flattenings of $T$. First factor $T^{(1)}$ with rank $D_1$, such that $T^{(1)}=A_1T_2^{(1)}$, with $A_1\in\C^{d_1\times D_1}$ and $T_2 \in \C^{D_1} \otimes \C^{d_2} \otimes \cdots \otimes \C^{d_N}$. Since $A_1$ has full column rank, the later flattening ranks are unchanged:
\begin{equation*}
    \rank(T^{(k)})=\rank(T_2^{(k-1)}),
\qquad k=2,\dots,N-1.
\end{equation*}
Now, take $T_2$ and flatten it to shape $T_2 \in \C^{D_1d_2 \times d_3\cdots d_N}$. Therefore, $T_2^{(1)}$ is a matrix of rank $D_2$ and we write again $T_2^{(1)} = A_2T_3^{(1)}$ with $A_2\in\C^{d_2D_1\times D_2}$ and $T_3 \in \C^{D_2} \otimes \C^{d_3} \otimes \cdots \otimes \C^{d_N}$. Repeating the same construction, we find that the free coordinates of $T \in \widetilde V^{=}_{\boldsymbol D,\boldsymbol d}$ are the entries of the tensors $\{A_1 ,\dots, A_{N-1}, T_N\}$, where $A_i\in\C^{(D_{i-1}d_i)\times D_i}$ satisfies a gauge condition $(A_i)_{I_i,*}=I_{D_i}$ and the final tensor $T_N\in\C^{D_{N-1}\times d_N}$ has full row rank. The remaining requirements are non-vanishing minor conditions, hence open conditions. Therefore, $\widetilde V^{=}_{\boldsymbol D,\boldsymbol d}$ is locally isomorphic to an
open subset of affine space, and hence smooth.
\\
Finally, the projectivization $V^{=}_{\boldsymbol D,\boldsymbol d}
=
\P\bigl(\widetilde V^{=}_{\boldsymbol D,\boldsymbol d}\bigr)$ does not affect smoothness. Therefore
$V^{=}_{\boldsymbol D,\boldsymbol d}$ is a smooth Zariski open subset of $V_{\boldsymbol D,\boldsymbol d}$.
\end{proof}
\vspace{-2em}
\subsubsection{Proof Prop. \ref{prop:horizontal_conditions}}\label{app:proofs_horizontal_conditions}
\begin{proof} Let $(\dot M,\dot C)\in\ker(D_p\Phi_{\boldsymbol D,\boldsymbol d})$ be a vertical tangent vector in $T_p\mathcal P_{\boldsymbol D,\boldsymbol d}$. Since we work with the Frobenius metric on the Stiefel factors and the Fubini--Study metric on the normalized horizontal slice of the projective factor, the orthogonality condition splits as
\begin{equation*}
    \sum_{i=1}^{N-1}\sum_{j_i=1}^{d_i}
\Re\tr\!\bigl((\delta M^i_{j_i})^*\dot M^i_{j_i}\bigr)
+
\sum_{j_N=1}^{d_N}
\Re\tr\!\bigl((\delta C_{j_N})^*\dot C_{j_N}\bigr)
=0.
\end{equation*}
Substituting the form of the vertical directions in \eqref{eq:vertical_directions_explicit} and rearranging terms according to the variables $X_i$, we obtain $\sum_{i=1}^{N-1}\Re\tr(X_iY_i)=0$, with matrices $Y_i$ as in the Proposition \ref{prop:horizontal_conditions}.
\\
Since the $X_i$ are independent, the orthogonality condition holds for all vertical directions if and only if $\Re\tr(X_iY_i)=0$ for all $X_i\in\mathfrak u(D_i)$ with $i=1, ..., N-1$. For a complex matrix $Y$, the condition $\Re\tr(XY)=0$ for all skew-Hermitian $X$
is equivalent to $Y$ being Hermitian. Applying this to each $Y_i$ yields the claim.
\end{proof}

\bibliography{apssamp}

\end{document}